\newtheorem{example}{Example}
\newtheorem{lemma}{Lemma}
\newtheorem{defn}{Definition}
\newtheorem{theo}{Theorem}
\newtheorem{corollary}{Corollary}
\begin{document}

\begin{frontmatter}

\title{Characterizations of Network Auctions and Generalizations of VCG}

\author[a]{\fnms{Mingyu}~\snm{Xiao}\thanks{Corresponding author. Email: myxiao@uestc.edu.cn.}}
\author[a]{\fnms{Guixin}~\snm{Lin}}
\author[a]{\fnms{Bakh}~\snm{Khoussainov}} 
\author[a]{\fnms{Yuchao}~\snm{Song}}

\address[a]{University of Electronic Science and Technology of China, Chengdu, China}

\begin{abstract}
  With the growth of networks, promoting products through social networks has become an important problem. For auctions in social networks, items are needed to be sold to agents in a network, where each agent  can bid and also diffuse the sale information to her neighbors. Thus, the agents' social relations are  intervened with their bids in the auctions. In network auctions, the classical VCG  mechanism fails to retain key properties. In order to better understand network auctions, in this paper, we characterize network auctions for the single-unit setting with respect to weak budget balance, individual rationality, incentive compatibility, efficiency, and other properties. For example, we present sufficient conditions for mechanisms to be efficient and (weakly) incentive compatible. With the help of these properties and new concepts such as rewards, participation rewards, and so on, we show how to design efficient mechanisms to satisfy incentive compatibility as much as possible, and incentive compatibility mechanisms to maximize the revenue. Our results  provide insights into understanding auctions in social networks.
\end{abstract}
\end{frontmatter}

  \section{Introduction}

   Over the last two decades, mechanism design in social networks  has turned into an important research topic. Research on incentivizing information diffusion in social networks is mainly divided into two categories: non-strategic agent settings \cite{kempe2003maximizing} and strategic agent settings \cite{krishna2009auction,emek2011mechanisms}.
   Classical auction design did not incorporate social network referrals into the sale of commodities, which partially limited the seller's revenue or social welfare \cite{borgatti2009network,nisan2007algorithmic,bajari2003winner}. As  opposed   to classical auctions, the strategic transmission of the sale information is the process of expanding the expected results by setting appropriate  tools that  incentivize information diffusion (i.e., in the forms of referrals). Such viral marketing mechanisms are more economically efficient in attracting audiences than advertising in TV, newspapers, and search engines \cite{emek2011mechanisms,leskovec2007dynamics}.

  In the classical auction design, not much attention  was   paid to introducing social network referrals into the sale of commodities \cite{krishna2009auction,myerson1981optimal}. In some respects, this implies that the seller's revenue or social welfare in the classical auction can  be partially optimized. Although the classical mechanism VCG \cite{vickrey1961counterspeculation,clarke1971multipart,Groves1973Incentives} can be applied in network auctions, it may lead to a deficit for the seller.  Consequently, there has been growing interest in designing and analyzing mechanisms to handle auctions where agents are connected through network links \cite{borgatti2009network}.

  To overcome the budget balance problem of the VCG,  a series of studies have been conducted on network auctions. Li et al. \cite{li2017mechanism} presented the information diffusion mechanism (IDM) in social networks, which is not only  budget balance   but  also incentive compatible. Lee \cite{DBLP:conf/sigecom/Lee16} and Jeong and Lee \cite{jeong2020groupwise} then studied budget balance mechanisms to maximize the social welfare at the expense of truthfulness. Takanashi et al. \cite{DBLP:journals/corr/abs-1904-12422} proposed mechanisms to make a trade-off between budget feasibility and efficiency. Li, Hao, and Zhao \cite{li2020incentive} characterized certain conditions for incentive compatible mechanisms. Moreover, there are numerous extended models that follow this research line, such as, network auctions under a multi-unit unit-demand setting studied by \cite{zhao2018selling,DBLP:conf/aaai/KawasakiBTTY20}, multi-unit network auctions with budgets studied by~\cite{DBLP:conf/aaai/XiaoSK22}, and many others \cite{DBLP:journals/corr/abs-2010-04933,li2018customer,zhang2020sybil}. Recent surveys of this area can be found in \cite{DBLP:conf/ijcai/GuoH21,DBLP:journals/ai/LiHGZ22}.

  Although many different mechanisms for network auctions have been proposed, we still lack general theorems  that study the interplay between key concepts in mechanism design for network auctions. Our paper aims to address this gap by scrutinizing single-unit auctions in social networks, with particular attention to the interplay between efficiency and incentive compatibility.
  The results of our investigation will be beneficial in advancing our comprehension and crafting mechanisms for network auctions.

\textbf{Contributions.}  The setting of our work follows the most fundamental and well-studied model~\cite{li2017mechanism,DBLP:journals/ai/LiHGZ22}. A seller desires to sell a commodity to agents connected through a network. We need to incentivize agents to not only report truthful bids but also further propagate the sell information to their neighbors. We want to characterize mechanisms that address the following important properties: efficiency, weak budget balance (WBB), individual rationality (IR), incentive compatibility (IC), and false-name proof (FNP).  Section \ref{S2} defines these concepts and presents our models.  The contributions of our work are enumerated below.

  \noindent
  1. We introduce the concept of \emph{reward} for each agent.  The reward quantifies the agent's influence on the market by attending or abstaining from the auction.
   We show that rewarding agents is important for ensuring incentive compatibility.  Specifically, we prove that in any mechanism that satisfies IR, IC, and efficiency, every agent except the winner should receive a utility at least equal to their respective reward. Please see Theorem~\ref{prop-i-0} for further details.

  \noindent
  2. We define the concept of \emph{participation  reward} for each agent.  This is the amount of social welfare that   the agent can explicitly control. We show that  returning   the participation  reward to agents plays a  crucial   role in  weak   incentive compatibility (WIC). Theorem~\ref{prop-i-1} proves that in any mechanism satisfying IR, WIC, and efficiency,  every agent except the winner should receive a utility that is at least equal to their respective participation reward.

  \noindent
  3.  We design three mechanisms.  The first is the participation  VCG mechanism (PVCG) that   maximizes the revenue  in the class of all  IR, WIC, and efficient mechanisms.
   The second and third are the interruption VCG mechanism (IVCG) and $\delta$-IVCG. They are IC mechanisms at the expense of allocation efficiency. Furthermore, we show that they can get a good revenue.  Specifically, their revenue   is
   not less than that of the most well-known mechanisms on any profile.

  \noindent
  4. Most previous network auction mechanisms assume that the agents report  subsets of their neighbors.
  However,  an agent may create replicas of herself and report a false neighbour set to gain more utility. This behavior   is  known as {\em false-name attacks}. False-name  attacks have   been studied in areas such as multi-level marketing \cite{shen2019multi}, social choice \cite{conitzer2010using}, and blockchains \cite{ersoy2018transaction}.  In network auctions, false-name attacks  can seriously affect the utility  of the seller and agents in a social network \cite{yokoo2004effect}. Zhang et al. \cite{zhang2020sybil} proposed double geometric mechanisms (DGM) that address   false-name attacks and characterize the  uniqueness  of the mechanisms   with  various key properties.  Despite these works, false-name attacks were not well studied on network auctions. Our proposed IVCG is a WBB  and IC mechanism that can also prevent false-name attacks.
  Such mechanisms  are called {\em false-name proof} (FNP) mechanisms.

  Table \ref{tbl}  presents a full comparison among
  mechanisms VCG, IDM, PVCG, IVCG, and $\delta$-IVCG with regard to efficiency, IR, IC, WBB, and FNP properties.

  \begin{table}
    \caption{Comparison among VCG, IDM, PVCG, IVCG, and $\delta$-IVCG}
    \label{tbl}
  \centering
   \begin{threeparttable}
  \scalebox{0.8}{
  \begin{tabular}{cccccccc}
  \toprule
   Mechanisms  &  efficiency & IR &IC &WIC & WBB & FNP   \\
    \midrule
    VCG &\checkmark &  \checkmark &\checkmark &\checkmark  & $\times$ & $\times$  \\
    IDM &$\times$ &\checkmark &\checkmark &\checkmark  & \checkmark &$\times$ \\
    PVCG &\checkmark & \checkmark &$\times$ &\checkmark& \checkmark&$\times$ \\
    IVCG &$\times$& \checkmark &\checkmark & \checkmark & \checkmark& \checkmark  \\
    $\delta$-IVCG &$\times$& \checkmark &\checkmark & \checkmark & \checkmark& $\times$  \\
    \bottomrule
    \end{tabular}
   }
      \end{threeparttable}
  \end{table}

  \section{The Model and Basic Definitions} \label{S2}

  A social network consists of a seller $s$ together with $n$ agents $1$, $2$, $\ldots$, $n$. We sometimes call the agents \emph{bidders} or \emph{buyers}  and denote the set of all agents by $N=\{1, \ldots, n\}$.
  Within the network, there exist undirected edges connecting nodes (agents or seller), which signify either social or business connections between them.   The seller   wants to sell a single commodity in the network. The seller and agents can  exchange sale information only through their neighbors in the network. We call an auction model under this setting the network auction. Our auctions are always network auctions unless otherwise stated.  When the seller is directly connected to all agents, the network auction reduces to the classical auction without networks.

  \smallskip

  For an agent  $i\in N$, let $r_i$ be  the set of neighbors of  $i$. The seller's neighbor set is $r_{s}$. Each agent $i$ has a \emph{private valuation} $v_{i}\ge 0$,   the maximum   value that the agent
   can pay for the commodity, which is also called the \emph{bid} of agent $i$.  At the start of the auction, only the seller's neighbors $r_{s}$ are informed of the sale.

  \smallskip

During the auction, each agent $i \in N$ reports a bid $v_{i}'$ and a subset of neighbors $r_{i}'$ who are informed of the sale. Note that both the bid $v_{i}'$ and the neighbour report $r_i'$ can be different from the truthful reports $v_{i}$ and $r_i$.  Call the pair $(v'_{i},r'_{i})$ the  \emph{auction profile} (or \emph{profile} for short) of  agent $i$ and denote it  by $a_i'$. If  $a_i'$ equals $(v_{i},r_{i})$, then   agent $i$ is \emph{truthful}. The  profile $a'_{i}=\emptyset$ means that  agent $i$ does not want to attend the auction or agent $i$ is not aware of the sale. A  \emph{global  profile} is the  vector $a'=(a_{i}', i\in N)$ for all auction profiles. By $a_{-i}'=(a_{j}', j\neq i)$, denote the  profile of all agents but $i$.

  \smallskip

  Given a  global profile $a'$, we can construct a \emph{spreading graph}. The set of nodes of the spreading graph consists of  seller $s$ and all agents. For  two agents $i$ and $j$, there is a directed edge from $i$ to $j$ if $j\in r'_i$.  Hence, there is a directed edge from $s$ to each agent in $r_s$. There may be a pair of opposite edges between two agents.
  The spreading graph constructed based on the global profile may be different from the existing social network even we ignore the edge direction in the spreading graph.
  We will say agent $i$ is \emph{closer} to $s$ than agent $j$ if there is a directed path from $s$ to agent $i$ with the length not greater than that of any directed path from $s$ to $j$ in the spreading graph.
  An agent $i$ is considered \emph{activated} if there exists a directed path from the seller $s$ to $i$ in the spreading graph, whereas \emph{unactivated} agents are not informed of the sale.
  We ``trim'' the global  profile $a'$ and spreading graph by letting $a_i'=\emptyset$ for all unactivated agents $i$ and deleting all unactivated agents from the spreading graph. We may not draw a directed edge from $j$ to $i$ if there is  an edge from $i$ to $j$ and the distance from  $s$ to $j$ is greater than the distance from  $s$ to $i$. This kind of edge $ji$ will not affect any property since telling the sale information to an agent already knowing the sale is useless in our model. However, readers can simply assume this kind of edge is existing even it is not drawn.



  \begin{defn} (\textbf{Mechanism})
      An \emph{auction mechanism} $\mathcal{M}$, given a global profile $a'$, outputs an
      \emph{allocation policy} $\pi: N\rightarrow \{0,1\}$,  and  a \emph{payment policy} $m: N\rightarrow R$. Since there is only one item to sell, we  have $\sum_{i\in N}\pi(i)=1$.
  \end{defn}

  Let $\mathcal M$ be  a mechanism.   Agent $i$ with $\pi(i)=1$ is the \emph{winner} who gets the commodity.  For the payment policy, if $m(j)\geq 0$ then agent $j$ pays $m(j)$, and if  $m(j)< 0$ then agent $j$ receives $|m(j)|$.  A mechanism $\mathcal{M}$ is \emph{feasible} if it allocates the commodity to an activated agent. From now on, mechanisms are always feasible.

  \smallskip
  Given a profile $a'$, the utility of agent $i$ under a mechanism $\mathcal{M}$ is denoted by \ $u_i(a', \mathcal{M}):= \pi(i)v_i-m(i)$.  Meanwhile, the utility $u_s(a', \mathcal{M})$ of  seller $s$, also called the \emph{revenue}, is  the sum of the payments: $m(1)+\ldots + m(n)$.
  The sum of utilities of all participants in the social network is called the \emph{social welfare} and denoted by  $SW(a', \mathcal{M})$. Thus:
  $$
  SW(a', \mathcal{M}) = \sum_{i\in N\cup \{s\}}   u_i(a', \mathcal{M})=\sum_{i\in N} \pi(i)v_i.
  $$
We also define the \emph{allocation efficiency} as
$$
AE(a', \mathcal{M}):=\sum_{i\in N} \pi(i)v'_i.
$$
We have $SW(a', \mathcal{M}) \in \{v_1, \ldots, v_n\}$ and $AE(a', \mathcal{M}) \in \{v'_1, \ldots, v'_n\}$.

\begin{defn} (\textbf{Social Welfare Maximum} and \textbf{Efficiency})
  A mechanism $\mathcal{M}$ is \emph{social welfare maximizing (SWM)} if $SW(a', \mathcal{M})\geq SW(a', \mathcal{M}')$ holds for any global  profile $a'$ and any mechanism $\mathcal M'$;
  A mechanism $\mathcal{M}$ is \emph{efficient} if $AE(a', \mathcal{M})\geq AE(a', \mathcal{M}')$ holds for any global  profile $a'$ and any mechanism $\mathcal M'$.
  \end{defn}




  \begin{defn} (\textbf{Weak Budget Balance})
  A mechanism $\mathcal{M}$ is \emph{weakly budget balanced (WBB)} if for any  profile $a'$ the utility of the seller is non-negative:  $u_s(a', \mathcal{M})\geq 0$.
  \end{defn}

   \begin{defn} (\textbf{Individual Rationality})
  A mechanism $\mathcal{M}$ is \emph{individually rational (IR)} if for each agent $i$, her utility is non-negative as long as she reports her bid and neighbors truthfully, i.e., $u_i((a_i, a'_{-i}), \mathcal{M})\geq 0$ holds for any
  $a'_{-i}$.
  \end{defn}


  \begin{defn}(\textbf{Incentive Compatibility})
  A mechanism $\mathcal{M}$ is \emph{incentive compatible (IC)} if each agent $i$ gets the highest utility by reporting her  profile truthfully, i.e., $u_i(((v_i, r_i), a'_{-i}), \mathcal{M})\geq u_i((a'_i, a'_{-i}), \mathcal{M})$ holds for any profiles $a_i'$ and $a'_{-i}$.
  \end{defn}

  In the definition of incentive compatibility, we want to prohibit agents  from misreporting both  higher and lower bids  than their true values for the commodity. However, sometimes this is too strong for some good mechanisms to satisfy. Note that false lower bids may be more harmful to society than false higher bids in some cases, and higher bids may be constrained by the budget and others. Thus, in the literature a weaker version of incentive compatibility, which prohibits false lower bids only, has been studied (see~\cite{goel2014mechanism}).
  It is also called \emph{one-way truthfulness}.

  \begin{defn} (\textbf{Weak Incentive Compatibility})
  A mechanism $\mathcal{M}$ is \emph{weakly incentive compatible (WIC)}
  if $u_i(((v_i, r_i), a'_{-i}), \mathcal{M})\geq u_i(( (v'_i, r'_i), a'_{-i}), \mathcal{M})$ holds for any profiles
  $a'_{-i}$ and $(v'_i, r'_i)$ such that $v'_i\leq v_i$ and $r_i'\subseteq r_i$.
  \end{defn}

Next, we are ready to consider sufficient conditions for mechanisms satisfying the above properties.

  \section{Social Welfare and Efficiency}
  Social welfare is an important concept frequently considered in mechanism design.
  However, controlling social welfare can be challenging.  We give an example to explain this.
  There are two agents with reported bids 1 and 100. Their truthful bids can be (1,100) and (100,1).
  In the first case, allocating the commodity to the second agent maximizes social welfare. In contrast, allocating the commodity to the first agent yields maximum social welfare in the second case. Unfortunately, it is impossible to determine which agent to allocate based solely on reported bids, rendering SWM impractical.  Since SWM is too strong to satisfy independently, we consider efficiency.
  The agent with the highest bid among all activated agents is called the \emph{highest bidder}.
  According to the definition of social welfare and efficiency, we can easily derive the following property.
%
%
  \begin{lemma}\label{lemma1-effi}
  An efficient mechanism always allocates the commodity to the highest bidder.
  Furthermore, if all agents report truthfully, an efficient mechanism reaches the maximum social welfare. \qed
  \end{lemma}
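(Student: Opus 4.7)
The plan is to unfold the definition of efficiency and compare the target mechanism against a specific benchmark mechanism that explicitly allocates to the highest bidder. Both claims essentially reduce to the identity $AE(a',\mathcal{M}) = v'_j$ where $j$ is the winner under $\mathcal{M}$, combined with the fact that any feasible mechanism must award the item to an activated agent.

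For the first statement, I would proceed as follows. Let $\mathcal{M}$ be an efficient mechanism and let $i^*$ be a highest bidder among the activated agents, so $v'_{i^*} = \max\{v'_i : i \text{ activated}\}$. Define a benchmark feasible mechanism $\mathcal{M}^*$ that allocates the item to $i^*$; then $AE(a',\mathcal{M}^*) = v'_{i^*}$. By the definition of efficiency, $AE(a',\mathcal{M}) \geq AE(a',\mathcal{M}^*) = v'_{i^*}$. On the other hand, since $\mathcal{M}$ is feasible, its winner $j$ is activated, and $AE(a',\mathcal{M}) = v'_j \leq v'_{i^*}$ because $i^*$ is a highest bidder. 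These two inequalities force $v'_j = v'_{i^*}$, so $j$ is itself a highest bidder (up to ties).

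For the second statement, when every agent reports truthfully we have $v'_i = v_i$ for each activated agent $i$. This collapses the two notions: $AE(a',\mathcal{M}) = SW(a',\mathcal{M})$ for any feasible $\mathcal{M}$, since both sums range over the same winner. Thus the efficient mechanism, which by the first part awards to the highest (true) bidder among activated agents, achieves the value $\max\{v_i : i \text{ activated}\}$. Any feasible mechanism $\mathcal{M}'$ yields $SW(a',\mathcal{M}') = v_{j'}$ for some activated $j'$, and this is at most that maximum. Hence $SW(a',\mathcal{M}) \geq SW(a',\mathcal{M}')$ holds for every feasible $\mathcal{M}'$ on this truthful profile.

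I do not expect a serious obstacle here; the lemma is essentially bookkeeping. The only subtle point worth highlighting in the write-up is the restriction to \emph{activated} agents: feasibility forbids allocating to unactivated agents, and both $AE$ and $SW$ implicitly reflect this since unactivated agents have $a'_i = \emptyset$ and therefore cannot be the winner under any feasible mechanism. Making this scope explicit prevents a spurious worry that some unactivated agent with a high true valuation might witness a larger $SW$ under some mechanism.
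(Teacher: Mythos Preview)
Your argument is correct and is exactly the unfolding of the definitions that the paper intends: the paper itself gives no proof beyond the \qed{} symbol, treating the lemma as immediate from the definitions of $AE$, $SW$, and efficiency. Your benchmark-mechanism comparison for the first claim and the observation that $AE=SW$ under truthful reporting for the second claim are precisely the one-line justifications being left implicit.
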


  To design efficient mechanisms, we only consider the payment policy since the allocation policy is fixed. We use the indices $\{i_1, i_2, \dots, i_n\}$ to rank the agents such that $v'_{i_j}\geq v'_{i_k}$ if and only if $j\leq k$. Thus, agent $i_1$ is always the highest bidder.

  We now consider how an agent influences the allocation efficiency   by not attending an auction.
  Let $a'$ be  a global  profile.
  Let $H(a')$ be the highest bid in $a'$.
 We have that  $H(a')=v'_{i_1}$.  Let $a'_{-[i]}$ be the global  profile obtained from $a'$ by (1) replacing $a'_i$ with $\emptyset$ indicating that agent $i$ does not attend the auction, and (2) replacing the auction  profiles   of all thus unactivated agents (resulted by (1)) with $\emptyset$.

  \begin{defn} (\textbf{Rewards})
  For a global profile $a'$,
  the \emph{reward} of  agent $i$ is defined as \  $rwd_i:=H(a')-H(a'_{-[i]})$.
  \end{defn}

The rewards present agents' contribution to the efficiency by participation and diffusion.
  The next result sheds light on the relationship between rewards and key properties IR, IC, and efficiency.

   \begin{theo}\label{prop-i-0}
   In network auctions,  any mechanism satisfying IR, IC, and efficiency possesses the following properties  :
  \begin{enumerate}
  \item[(a)] The highest bidder $i_1$ gets the commodity with a payment not greater than $H(a'_{-[i_1]})$.
  \item[(b)] The utility of any agent $j$ apart from  the highest bidder $i_1$ is at least her reward $rwd_j$.
  \end{enumerate}
   \end{theo}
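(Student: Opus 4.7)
The plan is to prove (a) first by a bid-plus-no-diffusion deviation, and then to leverage (a) inside a Myerson-style threshold argument to prove (b).

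For (a), Lemma~\ref{lemma1-effi} already forces efficiency to award the commodity to $i_1$, so only the payment bound requires work. My idea is to let $i_1$ consider the class of deviations of the form $(v'',\emptyset)$ for $v''>H(a'_{-[i_1]})$. Since $i_1$'s own activation cannot depend on her own forwarding, she remains activated; and every other agent that was reached only via $i_1$ disappears, so the set of activated agents other than $i_1$ coincides exactly with the activated set of $a'_{-[i_1]}$. Hence $i_1$ is still the highest bidder in the new profile, efficiency still makes her the winner, and IC forces $m(i_1)\le m(v'',\emptyset)$. Applying IR to a hypothetical agent whose true type really is $(v'',\emptyset)$ yields $m(v'',\emptyset)\le v''$. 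Letting $v''\downarrow H(a'_{-[i_1]})$ closes (a).

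For (b), I split on whether removing $j$ deactivates $i_1$. If not, then $H(a'_{-[j]})=v_{i_1}$, so $rwd_j=0$ and IR alone suffices. Otherwise $j$ lies on every $s$-to-$i_1$ directed path and $rwd_j=v_{i_1}-H(a'_{-[j]})>0$. I study the restriction $v\mapsto m(v,r_j)$ of the payment function with $r_j$ fixed at its true value. First, for any $v^{*}>v_{i_1}$, the hypothetical report $(v^{*},r_j)$ makes $j$ the highest bidder, and $j$'s removal from this hypothetical profile still kills $i_1$, so applying (a) inside this profile gives $m(v^{*},r_j)\le H(a'_{-[j]})$. Second, IC applied to a hypothetical type $(v_{i_1}-\epsilon,r_j)$, comparing truth-telling against the deviation $(v^{*},r_j)$, crosses the losing/winning boundary and yields $m(v_{i_1}-\epsilon,r_j)\le m(v^{*},r_j)-(v_{i_1}-\epsilon)$. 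Third, since $j$'s actual report $(v_j,r_j)$ and $(v_{i_1}-\epsilon,r_j)$ both lose (same allocation), IC pins $m(v_j,r_j)\le m(v_{i_1}-\epsilon,r_j)$. Chaining the three inequalities and sending $\epsilon\downarrow 0$ gives $m(v_j,r_j)\le H(a'_{-[j]})-v_{i_1}$, which is precisely $u_j=-m(v_j,r_j)\ge rwd_j$.

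The main obstacle is closing the $v_{i_1}-v_j$ gap in (b): IC applied only through $j$'s own deviations would yield the weaker $u_j\ge v_j-H(a'_{-[j]})$, short of $rwd_j$ by $v_{i_1}-v_j$. The trick is to probe the mechanism at a hypothetical type whose value sits just below the winning threshold $v_{i_1}$, so that the IC inequality for that type spans the losing/winning boundary and its right-hand side carries $v_{i_1}$ rather than $v_j$; the constant-payment behavior of the losing region (itself a quick IC consequence) then transfers the sharper bound back to $j$'s actual payment.
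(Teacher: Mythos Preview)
Your argument is correct and follows essentially the same skeleton as the paper's proof: part~(a) is handled identically by the $(v'',\emptyset)$ deviation plus IR at the hypothetical type, and part~(b) proceeds by manufacturing a hypothetical type for $j$, invoking~(a) there, and chaining IC inequalities back to $j$'s actual report. The only cosmetic difference is in which deviation makes $j$ the highest bidder inside part~(b): the paper keeps $j$'s bid just \emph{below} $v'_{i_1}$ and drops $r'_j$ to $\emptyset$, whereas you keep $r_j$ fixed and push the bid just \emph{above} $v'_{i_1}$; both routes land on the same payment bound $H(a'_{-[j]})$ via~(a), and your pure bid-variation version is arguably a touch cleaner since it never needs the neighbor-report deviation in~(b).
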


   \begin{proof}
  Part $(a)$:  The highest bidder  $i_1$  is  awarded the commodity as the mechanism is efficient. Assume that agent $i_1$ pays more than $v^*=H(a'_{-[i_1]})$,  the highest bid in $a'_{-[i_1]}$.  Hence, the payment of $i_1$ is $v^*+\varepsilon$ for some $\varepsilon> 0$. We have $v'_{i_1} \geq v^*+\varepsilon$, otherwise the mechanism is not  IR. The utility of $i_1$ is $v_{i_1}-v^*-\varepsilon$. If so,  agent $i_1$ gets more utility by reporting her bid as $v^*+\varepsilon/2$ instead of $v'_{i_1}$ and setting $r_{i_1}'=\emptyset$. By doing so,  agent $i_1$ is still the unique highest bidder who gets the commodity. Moreover, the payment of $i_1$ should not be greater than $v^*+\varepsilon/2$. Otherwise,  the mechanism is not IR if $v^*+\varepsilon/2$ is the truthful evaluation of agent $i_1$.  So the utility of $i_1$ will be at least $v_{i_1}-v^*-\varepsilon/2$, greater than $v_{i_1}-v^*-\varepsilon$. This contradicts the assumption that  the mechanism is IC. We proved Part $(a)$.

  \smallskip
  \noindent
  Part $(b)$: Consider an agent $j$, where $j\neq i_1$.
Since the mechanism is IR, we know that no agent will get a negative utility.
So, if $rwd_j=0$, then we are done.  Assume that $rwd_j>0$.
If the utility of  agent $j$ is $rwd_j-\varepsilon$ for some $\varepsilon>0$, less than  $rwd_j=H(a')-H(a'_{-[j]})=v'_{i_1}-H(a'_{-[j]})$,
then  agent $j$ would get more utility by reporting her bid as $v'_{i_1}- \varepsilon/2$. The reason is as follows.
Let $a''$ be the profile obtained from $a'$ by replacing agent $j$'s bid with $v'_{i_1}- \varepsilon/2$.
For the profile $a''$ (we consider it as truthful),  agent $j$ must get utility at least $(v'_{i_1}- \varepsilon/2)-H(a'_{-[j]})$,  because if she does not get so much she may falsely report $r_j'=\emptyset$ to become the highest bidder and get this utility by Part (a).
Thus,  agent $j$ can falsely report her bid to $v'_{i_1}- \varepsilon/2$ to get utility at least $rwd_j-\varepsilon/2$, more than $rwd_j-\varepsilon$.
   \end{proof}

  \noindent

  Observe that if a mechanism tries to reward each agent, it may be too much for the mechanism to be WBB.
  Namely, we have the following result  that was  also observed by   ~\cite{li2020incentive}.

   \begin{corollary}\label{hardresult}
   In network auctions, no IR and WBB mechanism satisfies both of IC and efficiency. \qed
   \end{corollary}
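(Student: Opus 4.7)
The plan is to argue by contradiction: assume that some mechanism $\mathcal{M}$ is simultaneously IR, WBB, IC, and efficient, and then exhibit a single network instance on which its revenue is necessarily negative, contradicting WBB.

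The first step is to convert Theorem~\ref{prop-i-0} into a universal upper bound on the seller's revenue on a truthful profile $a'$. Part~(a) gives $m(i_1)\le H(a'_{-[i_1]})$ for the winner. For every non-winner $j$, the allocation satisfies $\pi(j)=0$, so $u_j=-m(j)$; combining this with part~(b) yields $m(j)\le -rwd_j$. Summing the payments produces
\[
u_s(a',\mathcal{M}) \;\le\; H(a'_{-[i_1]}) - \sum_{j\neq i_1} rwd_j.
\]
The problem therefore reduces to exhibiting one profile on which the right-hand side is strictly negative.

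For such a witness I would take a path-shaped network in which the seller $s$ is connected only to agent $1$, agent $1$ only to $s$ and $2$, agent $2$ only to $1$ and $3$, and agent $3$ only to $2$, with truthful bids $v_1=v_2=0$ and $v_3=1$. All three agents are activated, $i_1=3$, and $H(a'_{-[3]})=0$. Removing agent $1$ disconnects both $2$ and $3$ from $s$, yielding $rwd_1=1$; removing agent $2$ disconnects $3$, yielding $rwd_2=1$. The bound then forces $u_s\le 0-1-1=-2$, contradicting WBB.

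The real difficulty is not the algebra but choosing an appropriate topology. One needs a network in which ``intermediary'' agents carry large rewards, because removing them disconnects a high-value bidder from the seller; a path is the simplest such configuration and already suffices. This matches the intuition preceding the corollary: simultaneously paying every non-winner her full reward can easily cost the seller more than the winner's capped payment is able to cover.
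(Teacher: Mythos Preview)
Your argument is correct and follows essentially the same route as the paper: derive from Theorem~\ref{prop-i-0} an upper bound on the seller's revenue and then exhibit a path-shaped instance on which that bound is negative. The paper's illustration (Example~1) is marginally more economical---it uses only two agents $a,b$ on a path $s\!-\!a\!-\!b$ with bids $0$ and $1$, giving $u_s\le 0-1=-1$---but your three-agent path works just as well; the extra intermediary merely makes the deficit $-2$ instead of $-1$.
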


   \noindent

  We give a simple example that explains Corollary~\ref{hardresult}.

   \begin{example}{\em
   
   The example contains only two agents $\{a,b\}$. The spreading graph is a path from seller $s$ to $a$ and then to $b$.
    Suppose that agent $a$ submits a bid of 0, while agent $b$ submits a bid of 1.
     Assume that the mechanism is  IR, IC and efficient. By Theorem \ref{prop-i-0}(a), agent $b$ gets the commodity and pays at most 0.
   By Theorem \ref{prop-i-0}(b), agent $a$ gets at least $1-0=1$. The revenue  is $0-1=-1< 0$. Thus, the mechanism is not WBB.}
   \end{example}

To better explain the mechanisms, we first recast the famous Vickrey-Clarke-Groves (VCG) mechanism \cite{vickrey1961counterspeculation,clarke1971multipart,Groves1973Incentives}. 
We extend the VCG to the network setting and present the mechanism by using the new concept of rewards.

  \begin{quote}{ \textbf{The  VCG Mechanism in Networks}:\\
  Allocate the commodity to  agent $i_1$ with the highest bid $v'_{i_1}$ (break ties arbitrarily), charge  agent $v'_{i_1}$, and pay all agents $i$, including $i_1$, the reward $rwd_i$.}
  \end{quote}
 
 The VCG mechanism in networks sells the commodity to the highest bidder $i_1$  and returns agents the reward they deserve, which is essentially the same VCG mechanism in \cite{li2017mechanism,DBLP:conf/sigecom/Lee16} and the IC property has been proved. The payment policy of the VCG for each agent $i$ is defined as $m(i)=\pi(i)v'_{i}- rwd_i$. When there is no network, the reward of the highest bidder $i_1$ is
  $rwd_{i_1}=v'_{i_1}-v'_{i_2}$ and the reward of all other agents is $0$.  The payment of the highest bidder $i_1$ is  $m(i_1)=\pi(i_1)v'_{i_1}- rwd_{i_1}=v'_{i_1}-(v'_{i_1}-v'_{i_2})=v'_{i_2}$,
  the second highest bid. All other agents pay 0. For this case, the mechanism degenerates to the normal VCG.
  By Theorem \ref{prop-i-0}, we have:

  \begin{corollary}
  In network auctions, although the VCG in networks is not WBB, it maximizes the revenue  among all mechanisms satisfying IR, IC, and efficiency. \qed
  \end{corollary}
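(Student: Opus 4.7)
The plan is to split the claim into two parts: (i) that the VCG in networks is IR, IC and efficient but not weakly budget balanced, and (ii) that among all IR, IC, and efficient mechanisms no mechanism yields more revenue than the VCG on any profile. Part (i) is essentially bookkeeping: IR, IC and efficiency are already asserted in the excerpt when the VCG mechanism is recast, and the failure of WBB then follows immediately from Corollary \ref{hardresult} (an IR and WBB mechanism cannot simultaneously be IC and efficient).

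For part (ii), I would fix an arbitrary global profile $a'$ and an arbitrary mechanism $\mathcal{M}$ satisfying IR, IC and efficiency, and bound its revenue $u_s(a',\mathcal{M}) = \sum_{i\in N} m(i)$ in terms of the quantities $H(a'_{-[i_1]})$ and $\{rwd_j\}_{j\neq i_1}$ that will appear in the VCG revenue. Since $\mathcal{M}$ is efficient, Lemma \ref{lemma1-effi} forces the winner to be $i_1$, and by Theorem \ref{prop-i-0}(a) we obtain $m(i_1)\le H(a'_{-[i_1]})$. For every other agent $j\neq i_1$ we have $\pi(j)=0$, hence $u_j = -m(j)$, and Theorem \ref{prop-i-0}(b) gives $-m(j)\ge rwd_j$, i.e.\ $m(j)\le -rwd_j$. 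Summing these inequalities yields
\[
u_s(a',\mathcal{M}) \;\le\; H(a'_{-[i_1]}) - \sum_{j\neq i_1} rwd_j.
\]

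Next I would compute the VCG revenue on the same profile from the formula $m(i)=\pi(i)v'_i-rwd_i$: the winner $i_1$ pays $v'_{i_1}-rwd_{i_1} = v'_{i_1} - (H(a')-H(a'_{-[i_1]})) = H(a'_{-[i_1]})$ (using $H(a')=v'_{i_1}$), while every other agent pays $-rwd_j$. Adding these, the VCG revenue equals exactly the upper bound derived above, which proves that the VCG maximizes revenue across the class.

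The only subtlety I expect is being careful about two points. First, one must verify that applying Theorem \ref{prop-i-0}(b) to every $j\neq i_1$ is legitimate even when some agents are unactivated under $a'$; this is fine because unactivated agents have $a'_j=\emptyset$ and $rwd_j=0$, so the inequality $m(j)\le -rwd_j$ reduces to $m(j)\le 0$, which for those agents the mechanism can always set to $0$ without affecting the comparison. Second, Theorem \ref{prop-i-0}(b) is stated as ``at least her reward'', so one should confirm that the reward $rwd_j$ is computed with respect to the same truthful profile $a'$ assumed in the bound; this is exactly the setting in which the theorem is invoked. With these points checked, the plan delivers the corollary.
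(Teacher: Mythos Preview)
Your proposal is correct and follows exactly the approach the paper intends: the paper merely writes ``By Theorem~\ref{prop-i-0}'' and places a \qed, and you have spelled out the two-line computation behind that citation---namely, that Theorem~\ref{prop-i-0}(a) and (b) upper-bound every agent's payment by the corresponding VCG payment, and that the VCG payments meet these bounds with equality. One small wording issue: in your first subtlety you say ``the mechanism can always set to $0$'', but you are analysing an \emph{arbitrary} mechanism $\mathcal{M}$, not choosing its payments; the cleaner statement is simply that for an unactivated agent $j$ one has $a'_j=\emptyset$, so by the model convention $m(j)=0=rwd_j$ and the inequality $m(j)\le -rwd_j$ holds trivially.
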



  \begin{defn}
      An agent $i$ is called \emph{critical} if $rwd_i>0$. If after deleting critical agent $i$ there is no path from the seller $s$ to another critical agent $j$, then we say agent $i$ is agent $j$'s \emph{critical ancestor}, and agent $j$ is agent $i$'s \emph{critical descendant}.
  \end{defn}


  \begin{example}{\em
  In Figure \ref{G21}, we have a spreading graph with $11$ agents. For each node, the letter in the cycle is the name of the agent and the number beside  the cycle is the bid of the agent. Agents  $b, e, j$ and $k$ in Figure \ref{G21} are critical agents.
  }
  \end{example}

  Critical agents influence the allocation efficiency and we will pay  special attention to them.

  \begin{figure}[t]
      \centering
      \includegraphics[scale=0.7]{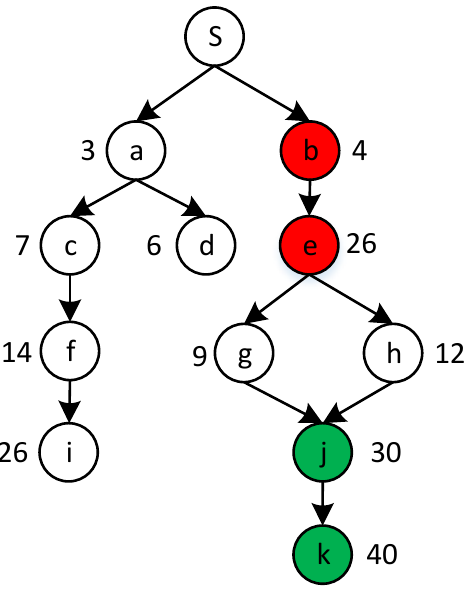}
    \caption{An example consists of seller $s$ and 11 agents. Red and green agents are critical agents, green agents are also interruption agents, and the numbers beside the agents are the bids.}\label{G21} 
  \end{figure}
  \noindent

  \begin{lemma}\label{prop-path}
  If there is only one highest bidder, then the bidder is a critical agent and all other critical agents (if they exist) are the highest bidder's critical ancestors.
  When there is more than one highest bidder, critical agents might not exist.
  \end{lemma}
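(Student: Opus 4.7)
The plan is to argue both claims essentially from the definitions of reward, critical agent, and the trimming operation $a'_{-[i]}$, leaning on the fact that removing an agent (and all agents it thereby deactivates) can only lower the highest bid, never raise it.

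First I would handle the uniqueness case. Suppose $i_1$ is the unique highest bidder, so $v'_{i_1}>v'_{i_2}$. In the profile $a'_{-[i_1]}$, agent $i_1$ is replaced by $\emptyset$, so $i_1$ contributes nothing to $H(a'_{-[i_1]})$, and every remaining activated agent bids at most $v'_{i_2}$. Hence $H(a'_{-[i_1]}) \leq v'_{i_2} < v'_{i_1}=H(a')$, giving $rwd_{i_1}>0$, so $i_1$ is critical. Then, for any other critical agent $j\neq i_1$, I would use $rwd_j>0$ to infer that $H(a'_{-[j]}) < v'_{i_1}$. If $i_1$ were still activated in the trimmed spreading graph of $a'_{-[j]}$, it would still contribute its bid $v'_{i_1}$ to $H(a'_{-[j]})$, contradicting this strict inequality. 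Therefore $i_1$ must be deactivated in $a'_{-[j]}$, which by definition of the trimming means every directed path from $s$ to $i_1$ in the original spreading graph passes through $j$. Since $i_1$ is critical (by the first part), the definition of critical ancestor is exactly satisfied: deleting $j$ disconnects $s$ from the critical agent $i_1$, so $j$ is $i_1$'s critical ancestor.

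For the second claim, I would exhibit a small counterexample. Take two agents $a$ and $b$ both adjacent to the seller $s$, each bidding the same value $v$; then $H(a')=v$ and $H(a'_{-[a]}) = H(a'_{-[b]}) = v$ since each of $a,b$ remains to support the highest bid after the other is removed. Any additional activated agents with bids strictly below $v$ are irrelevant to $H(\cdot)$. Consequently $rwd_i=0$ for all $i$, and no critical agent exists, confirming that when the highest bidder is not unique, critical agents need not exist.

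I do not anticipate a serious obstacle; the only subtlety is carefully invoking the trimming convention in the definition of $a'_{-[i]}$ to conclude that if $j\neq i_1$ lies on every $s$-to-$i_1$ directed path then $i_1$ is unactivated in $a'_{-[j]}$ (and conversely). This is where the argument for the ancestor claim really lives, but it follows immediately once we track what "activated" means after the replacement.
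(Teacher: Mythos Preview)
Your proposal is correct and follows essentially the same approach as the paper's own proof: both argue that removing any critical agent $j$ must drop the highest bid, hence must disconnect $i_1$ from $s$, making $j$ a critical ancestor of $i_1$; and both handle the second claim with a configuration of two tied highest bidders neither of which controls access to the other. Your write-up is simply more explicit about the inequalities and the trimming convention than the paper's terse version.
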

  \begin{proof}
      Assuming a single highest bidder $i_1$,  this bidder is inherently a critical agent.
      For any critical agent, if she does not attend the auction, the highest bid in the auction will change. Thus, the deletion of it will disconnect  the highest seller from the seller in the spreading graph.
      The critical agent is the highest bidder's critical ancestor.

      For the second part of the lemma, we assume that here are two highest bidders and none is a critical ancestor of the other. For this case, the reward of any agent is 0 and there is no critical agent.
  \end{proof}

%

  A critical agent can influence the allocation efficiency in several ways.  For example, the agent can attend the auction without reporting neighbors or can ignore the auction without attending it.
  Doing so  may affect the IC property of mechanisms.

  Let  $a'_{-(i)}$ be the global profile obtained from a given global profile $a'$ by (1)  replacing the profile $a'_i=(v'_i, r'_i)$ with $(v'_i,\emptyset)$ and then (2) replacing the profiles of all unactivated agents with $\emptyset$. It is easy to see the following inequalities: $$H(a'_{-[i]}) \leq H(a'_{-(i)}) \leq H(a').$$

  \begin{defn}(\textbf{Participation  Reward})
  The \emph{participation  reward} for an agent $i$ is defined as
  \begin{equation}\label{prwd}
  prwd_i:=H(a'_{-(i)})- H(a'_{-[i]}).
  \end{equation}
  If $prwd_i>0$, then agent $i$ is called an
   \emph{interruption} agent, e.g., in Figure \ref{G21}, agents $j$ and $k$ are interruption agents.
   \end{defn}

  The participation rewards present agents' contribution to efficiency by participation without diffusion.

   \begin{lemma}\label{subset}
  Any interruption agent is a critical agent. Furthermore, an interruption agent will become the unique highest bidder if she reports no neighbors.
   \end{lemma}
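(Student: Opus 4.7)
The plan is to leverage the inequality chain $H(a'_{-[i]}) \leq H(a'_{-(i)}) \leq H(a')$ stated just before the definition of participation reward, together with a careful structural comparison of the two trimmed profiles $a'_{-(i)}$ and $a'_{-[i]}$.

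For the first assertion, I would reason directly from the definitions: if $i$ is an interruption agent then $prwd_i = H(a'_{-(i)}) - H(a'_{-[i]}) > 0$, so $H(a'_{-(i)}) > H(a'_{-[i]})$. Combining this with the inequality $H(a'_{-(i)}) \leq H(a')$ yields $H(a') > H(a'_{-[i]})$, that is, $rwd_i > 0$, so $i$ is critical.

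For the second assertion, the key structural observation I would establish is that the spreading graphs associated with $a'_{-(i)}$ and $a'_{-[i]}$ induce the same set of activated agents away from $i$. In both profiles, agent $i$ reports no neighbors, so no outgoing diffusion edges emanate from $i$; since activation is defined via directed paths from $s$ in the spreading graph, the activation status of any agent $j \neq i$ is determined by paths that avoid $i$ as an intermediate diffuser, and these paths coincide in the two profiles. Hence the activated set in $a'_{-(i)}$ equals $\{i\} \cup A$, where $A$ is the activated set in $a'_{-[i]}$, and the reported bids on $A$ agree. This gives the identity $H(a'_{-(i)}) = \max\bigl(v'_i,\, H(a'_{-[i]})\bigr)$.

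Concluding from here is immediate. The interruption condition $H(a'_{-(i)}) > H(a'_{-[i]})$ forces the maximum to be attained by $v'_i$, so $v'_i > H(a'_{-[i]}) \geq v'_j$ for every activated $j \neq i$ in $a'_{-(i)}$; thus $v'_i$ strictly exceeds every other activated bid. When agent $i$ in fact reports $r'_i = \emptyset$ while keeping her bid $v'_i$, the trimmed global profile coincides by definition with $a'_{-(i)}$, so $i$ is activated and has the strictly highest bid, making her the unique highest bidder. The main obstacle I anticipate is cleanly justifying that the two trimmings $a'_{-(i)}$ and $a'_{-[i]}$ yield identical activation outside of $i$; this is intuitively clear but should be spelled out by a short reachability argument using the fact that $i$ has no outgoing diffusion edges in either modified profile.
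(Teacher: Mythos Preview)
Your proposal is correct and follows essentially the same approach as the paper: the first assertion is obtained from $rwd_i \geq prwd_i$ (equivalently, from the inequality chain you cite), and the second from observing that $a'_{-(i)}$ and $a'_{-[i]}$ differ only in whether agent $i$ herself participates, so the strict gap $H(a'_{-(i)}) > H(a'_{-[i]})$ forces $v'_i$ to be the unique highest bid. The paper's proof is terser and leaves the reachability comparison implicit, whereas you spell out the identity $H(a'_{-(i)}) = \max\bigl(v'_i, H(a'_{-[i]})\bigr)$ explicitly; the content is the same.
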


   \begin{proof}
   We have $rwd_i \geq prwd_i$ for any agent $i$. An interruption agent $i$ holds that $prwd_i>0$. Thus, $rwd_i> 0$ and then it is a critical agent.
   According to the definition of interruption agent, we know that the highest  bid for the following two cases are different: an interruption agent does not attend the auction; the agent attends the auction without reporting any neighbors. So, the agent herself should be the unique agent giving the highest bid when she reports no neighbors.
   \end{proof}

  \noindent
  Next, we   provide an insight into interaction between  participation  rewards  and  the properties IR, WIC, and efficiency.

  \begin{theo}\label{prop-i-1}
   In network auctions,   any mechanism satisfying IR, WIC, and efficiency possesses the following properties:
  \begin{enumerate}
  \item[(a)] The highest bidder $i_1$ gets the commodity with a payment not greater than $H(a'_{-[i_1]})$.
  \item[(b)]  The utility of each other agent $j\neq i_1$ is at least her participation  reward $prwd_j$.
  \end{enumerate}
  \end{theo}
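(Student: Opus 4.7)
The plan is to mirror the proof of Theorem~\ref{prop-i-0}, but at every step replace the IC-only deviations with moves that are already forbidden by WIC, namely reporting a lower bid and/or a subset of the true neighbor set. For Part~(a), notice that the deviation used in the proof of Theorem~\ref{prop-i-0}(a) — the highest bidder $i_1$ reporting bid $v^{*}+\varepsilon/2$ together with $r'_{i_1}=\emptyset$ — is an underreport in both coordinates: IR forces $v'_{i_1}\ge v^{*}+\varepsilon$ so $v^{*}+\varepsilon/2\le v'_{i_1}$, and trivially $\emptyset\subseteq r'_{i_1}$. Hence that deviation is already ruled out by WIC alone, and the argument for Theorem~\ref{prop-i-0}(a) transfers verbatim.

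For Part~(b), I would fix an agent $j\neq i_1$ and split on $prwd_j$. If $prwd_j=0$, IR gives $u_j(a',\mathcal{M})\ge 0= prwd_j$ and there is nothing to do. Assume $prwd_j>0$. By Lemma~\ref{subset}, $j$ is an interruption agent, and the deviation $a''_j=(v'_j,\emptyset)$ — which keeps the bid and shrinks the neighbor report to $\emptyset\subseteq r'_j$, so it is WIC-admissible — turns her into the unique highest bidder in the resulting profile $a'_{-(j)}$. Applying Part~(a) to the profile $a'_{-(j)}$ (treating it as the truthful input and noting that $(a'_{-(j)})_{-[j]}=a'_{-[j]}$), agent $j$ wins and pays at most $H(a'_{-[j]})$. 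Using $H(a'_{-(j)})=v'_j$ from Lemma~\ref{subset}, her utility at the deviation is therefore at least $v'_j-H(a'_{-[j]})=H(a'_{-(j)})-H(a'_{-[j]})=prwd_j$. WIC then yields $u_j(a',\mathcal{M})\ge prwd_j$, completing the proof.

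The main obstacle is the choice of deviation in Part~(b): the bid-inflation used in Theorem~\ref{prop-i-0}(b) is precisely the kind of overreport that WIC does not forbid, so that trick cannot be reused. The key insight is that for an interruption agent one can instead move along the neighbor dimension — dropping all reported neighbors is a subset operation, hence WIC-legal, and by Lemma~\ref{subset} it already promotes $j$ to the unique highest bidder, so Part~(a) becomes applicable to the deviated profile and the lower bound $prwd_j$ drops out directly from its definition. Once this substitution is identified, the rest is routine algebra.
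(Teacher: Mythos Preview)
Your proposal is correct and follows essentially the same route as the paper: Part~(a) reuses the proof of Theorem~\ref{prop-i-0}(a) verbatim (since the deviation there is already an underreport in both coordinates), and Part~(b) argues, via Lemma~\ref{subset}, that when $prwd_j>0$ the WIC-admissible deviation $(v'_j,\emptyset)$ promotes $j$ to the unique highest bidder so Part~(a) yields utility at least $prwd_j$. Your explicit remark that $(a'_{-(j)})_{-[j]}=a'_{-[j]}$ and your justification that the Part~(a) deviation is WIC-legal are small clarifications the paper leaves implicit, but the argument is the same.
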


  \begin{proof}
   The proof of Part (a)  is the same as the proof of  Theorem \ref{prop-i-0}(a). We prove Part (b). The mechanism is IR. So, no agent gets a negative utility. If $prwd_j=0$, then the utility of $j$ is at least $0$ due to IR.  Assume that $prwd_j>0$. If the utility of $j$ is smaller than  $prwd_j$,  then  the agent gets more utility by misreporting her neighbor set $\emptyset$.  By Lemma \ref{subset}, we know that  agent $j$ will become the unique highest bidder. She will pay the commodity with the payment at most $H(a'_{-[j]})$ by the first part of the theorem.  Now the utility of $j$ is at least $v'_{j}-H(a'_{-[j]})=H(a'_{-(j)})- H(a'_{-[j]})=prwd_j$.
   \end{proof}

  Compared to  Theorem \ref{prop-i-0}, Theorem \ref{prop-i-1} states   that we can pay less to agents  in case we aim to ensure the WIC property rather than the IC property. The next proposition  implies that paying the participation  rewards  only  may be good enough to be weakly budget balanced.

   \begin{lemma}\label{prop-i-2}
   Let $v'_{i_1}$ be the highest bid in the global  profile $a'$. Then we have \ $prwd_1+\ldots + prwd_n\leq v'_{i_1}$.
   \end{lemma}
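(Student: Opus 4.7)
The plan is to reduce the sum to interruption agents (since $prwd_i = 0$ otherwise) and then telescope along a chain structure derived from Lemmas~\ref{prop-path} and \ref{subset}. First I would dispose of the degenerate case in which the top bid is tied among two or more agents: by Lemma~\ref{prop-path} there are no critical agents, and by Lemma~\ref{subset} no interruption agents either, so the sum is $0 \leq v'_{i_1}$.

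In the main case, assume $i_1$ is the unique highest bidder. By Lemma~\ref{subset}, every interruption agent $j$ becomes the unique highest bidder when it reports no neighbors, so $H(a'_{-(j)}) = v'_j$ and hence
\[
prwd_j \;=\; v'_j - H(a'_{-[j]}).
\]
Moreover, $i_1$ itself is an interruption agent, because $v'_{i_1} > H(a'_{-[i_1]})$ when $i_1$ is the strict top bidder. By Lemmas~\ref{prop-path} and \ref{subset}, every interruption agent is a critical ancestor of $i_1$, which by definition means that it lies on every path from $s$ to $i_1$ in the spreading graph. Fix one such path $P^{*}$ and index the interruption agents $j_1, j_2, \ldots, j_k$ in the order they appear along $P^{*}$; in particular $j_k = i_1$.

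The key observation is that, for each $1 \leq \ell < k$, the prefix of $P^{*}$ from $s$ to $j_\ell$ does not contain $j_{\ell+1}$ (which appears strictly later on $P^{*}$), so $j_\ell$ remains activated after deleting $j_{\ell+1}$. Therefore $H(a'_{-[j_{\ell+1}]}) \geq v'_{j_\ell}$, which gives $prwd_{j_{\ell+1}} \leq v'_{j_{\ell+1}} - v'_{j_\ell}$. Combined with the trivial bound $prwd_{j_1} \leq v'_{j_1}$, the sum telescopes:
\[
\sum_{\ell=1}^{k} prwd_{j_\ell} \;\leq\; v'_{j_1} + \sum_{\ell=2}^{k} \bigl(v'_{j_\ell} - v'_{j_{\ell-1}}\bigr) \;=\; v'_{j_k} \;=\; v'_{i_1},
\]
and adding the zero contributions of the non-interruption agents finishes the bound.

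I expect the main obstacle to be extracting a clean linear ordering of the interruption agents. The content that does this work is packaged in Lemma~\ref{prop-path}: since all interruption agents are critical and hence critical ancestors of $i_1$, they all sit on every $s$-to-$i_1$ path, which is precisely what lets us index them along a single path $P^{*}$ and argue that deleting the $(\ell{+}1)$-st one does not deactivate the $\ell$-th one. Once this chain structure is in hand, the formula $prwd_j = v'_j - H(a'_{-[j]})$ from Lemma~\ref{subset} and the telescoping above make the rest routine.
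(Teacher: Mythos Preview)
Your approach matches the paper's: restrict to interruption agents, order them as a chain of critical ancestors of the highest bidder, and telescope. The paper telescopes directly via the inequality $H(a'_{-(j_\ell)}) \leq H(a'_{-[j_{\ell+1}]})$ and thereby obtains the slightly sharper bound $\sum_i prwd_i \leq v'_{i_1} - H(a'_{-[j_1]})$, whereas you first substitute $H(a'_{-(j)}) = v'_j$ from Lemma~\ref{subset}; the underlying inequality is the same.

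One small slip: in the tied-top-bid case you invoke Lemma~\ref{prop-path} to conclude there are no critical agents, but that lemma only says critical agents \emph{might not} exist when there are multiple highest bidders, not that they never do. For a concrete counterexample, take the path $s \to a \to b$ with $v'_a = v'_b = 10$: then $H(a'_{-(a)}) = 10$ and $H(a'_{-[a]}) = 0$, so $prwd_a = 10 > 0$ and $a$ is an interruption agent despite the tie. The fix is immediate: drop the case split and run your main argument unchanged, replacing only the final step $v'_{j_k} = v'_{i_1}$ by the inequality $v'_{j_k} = H(a'_{-(j_k)}) \leq H(a') = v'_{i_1}$. (The paper's own proof glosses over ties in the same way, implicitly assuming $j_x = i_1$.)
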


   \begin{proof}
  It suffices to  consider agents $i$ such that $prwd_i> 0$.  Set $N'=\{i \in N \mid prwd_i> 0\}$. By Lemmas \ref{prop-path} and \ref{subset}, all agents in $N'$ are critical ancestors of $i_1$. We list all the agents in $N'$ as $j_1, j_2, \dots, j_x$ such that $j_x=i_1$ and for each $i\in \{1,\dots,x-1\}$, 
    the distance from  $s$ to $j_{i}$ is smaller than that from  $s$ to $j_{i+1}$ in the spreading graph.
    So under the global  profile $a'$, after deleting agent $j_i$, all agents $j_{i'}$ with $i'> i$ will be disconnected from  seller $s$ in the spreading graph.   We have
  \begin{equation} \label{equ3}
  H(a'_{-(j_i)})\leq H(a'_{-[j_{i+1}]})
  \end{equation}
  for each $i\in \{1,\dots,x-1\}$.
  By (\ref{prwd}) and (\ref{equ3}), we have that
  $$ \sum_{i\in N} prwd_i=\sum_{i\in N'} prwd_i\leq H(a'_{-(j_x)})-H(a'_{-[j_1]})$$
  $$=v'_{i_1}-H(a'_{-[j_1]}).$$
  The lemma is proved.
   \end{proof}


  The previous analyses lead us to the following mechanism.

  \begin{quote}{
  \textbf{The Participation  VCG Mechanism (PVCG)}:\\
  Allocate the commodity to  agent $i_1$ with the highest bid $v'_{i_1}$ (break ties arbitrarily), charge  agent $v'_{i_1}$, and pay all agents $i$, including $i_1$, the participation reward $prwd_i$.}
  \end{quote}

  \noindent
  The main difference between VCG and PVCG mechanisms is that the VCG mechanism   pays each agent her reward while the PVCG mechanism pays each agent her participation  reward.   With no network, PVCG and VCG are coincide.

  \begin{theo}\label{th-PVCG}
  In network auctions, the PVCG mechanism is IR, WBB, WIC, and efficient.
  \end{theo}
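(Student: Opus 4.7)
The plan is to dispatch efficiency, individual rationality, and weak budget balance using earlier results, and then carry out a single case analysis for WIC. Efficiency is immediate from Lemma \ref{lemma1-effi}, since PVCG allocates the commodity to the highest bidder $i_1$. For IR, observe that when agent $i$ reports truthfully her utility equals $prwd_i(a')$: if $i=i_1$ she pays net $v_{i_1}-prwd_{i_1}$ for an item she values at $v_{i_1}$, while if $i\neq i_1$ she simply receives $prwd_i$; in either case $prwd_i(a')\ge 0$. For WBB, the seller's revenue is $v'_{i_1}-\sum_{i\in N} prwd_i$, which is non-negative by Lemma \ref{prop-i-2}.

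The main task is WIC. Fix an agent $i$ and a deviation $(v'_i,r'_i)$ with $v'_i\le v_i$ and $r'_i\subseteq r_i$; let $a'$ be the truthful global profile and $a''$ the deviation profile. The key identity is $a'_{-[i]}=a''_{-[i]}$: once $i$'s profile is erased and all of her now-orphaned descendants are trimmed, only edges that do not originate at $i$ remain, so the value of $r_i$ versus $r'_i$ is irrelevant. Writing $v^*:=H(a'_{-[i]})=H(a''_{-[i]})$, the same reasoning identifies the activated bidders other than $i$ in $a'_{-(i)}$ and $a''_{-(i)}$ with those in $a'_{-[i]}$, so $H(a'_{-(i)})=\max(v_i,v^*)$ and $H(a''_{-(i)})=\max(v'_i,v^*)$. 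This gives
\[
prwd_i(a')=\max(v_i-v^*,0),\qquad prwd_i(a'')=\max(v'_i-v^*,0).
\]

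Agent $i$'s truthful utility is $prwd_i(a')$ regardless of whether she wins under $a'$. Under the deviation, split into two cases. If $i$ wins under $a''$, then $v'_i$ dominates every activated bid in $a''$, which includes every bid in $a'_{-[i]}$, so $v'_i\ge v^*$ and $prwd_i(a'')=v'_i-v^*$; her deviation utility is $v_i-v'_i+(v'_i-v^*)=v_i-v^*=prwd_i(a')$. If $i$ does not win under $a''$, her deviation utility is $prwd_i(a'')=\max(v'_i-v^*,0)\le\max(v_i-v^*,0)=prwd_i(a')$. In either case the deviation yields at most the truthful utility, establishing WIC.

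The main obstacle is justifying the identity $a'_{-[i]}=a''_{-[i]}$ rigorously, since it requires tracking how removing $i$ interacts with the trimming of orphaned descendants under two different diffusion reports, and then lifting this to the companion formula for $H(a_{-(i)})$. Once the common threshold $v^*$ is identified, the WIC inequality collapses to elementary monotonicity of $\max(x-v^*,0)$ in $x$, and only the restriction $v'_i\le v_i$ is used substantively; the neighbor restriction $r'_i\subseteq r_i$ serves merely to keep the comparison within the class of deviations that WIC is meant to cover.
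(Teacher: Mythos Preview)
Your argument is correct and, for efficiency, IR, and WBB, coincides with the paper's proof. For WIC the paper proceeds differently: it splits agents by type (non-interruption agents, the winner $i_1$, and interruption agents $i\neq i_1$) and then handles bid misreports and neighbour misreports separately, arguing in each subcase that the relevant quantities $H(a'_{-(i)})$ and $H(a'_{-[i]})$ move in the right direction or stay fixed. You instead isolate the invariance $a'_{-[i]}=a''_{-[i]}$ and the closed form $prwd_i=\max(v'_i-v^*,0)$, after which WIC reduces to monotonicity in $v'_i$ and a single case split on whether $i$ wins under the deviation. Your route treats bid and neighbour deviations uniformly and makes explicit a point the paper leaves as ``apparent'' (that neighbour misreports cannot change $H(a'_{-(i)})$ or $H(a'_{-[i]})$); the paper's route is more narrative but requires tracking several agent-type transitions. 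Both are valid; yours is a bit more economical.
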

  \begin{proof}
      The PVCG allocates the commodity to an agent with the highest bid. By Lemma \ref{lemma1-effi}, the mechanism is efficient. By the inequalities $H(a'_{-[i]}) \leq H(a'_{-(i)}) \leq H(a')$, no agent has a negative utility. Hence,  the IR   property holds. The utility of the seller is $v'_{i_1}- \sum_{i\in N} prwd_i$. By Lemma \ref{prop-i-2}, the PVCG mechanism   is WBB.  So,  we need to prove  WIC.

      If  $i$  is not an interruption agent, then her utility is 0. By decreasing  her bid $v'_i$, she cannot become an interruption agent. Hence, her utility will not increase. The utility of the highest bidder $i_1$ is $v_{i_1}-H(a'_{-[i_1]})$. Decreasing her bid $v'_{i_1}$ will change neither $v_{i_1}$ nor $H(a'_{-[i_1]})$. So $i_1$ cannot get more utility by decreasing her bid.  For an interruption agent $i\neq i_1$ that does not get the commodity, her utility is $prwd_i=H(a'_{-(i)})- H(a'_{-[i]})$.   By decreasing $v'_{i}$ she lowers $H(a'_{-(i)})$ without changing $H(a'_{-[i]})$.

      It is apparent that no agent can change her type by misreporting her neighbors. For non-interruption agents, their utility stays constant at 0, so there is no incentive to lie. For interruption agents, neither $H(a'_{-(i)})$ nor $H(a'_{-[i]})$ can be modified through neighbor misreporting. Thus, truthful reporting of neighbors is  a dominant strategy for all agents. We conclude that the PVCG mechanism is WIC.
  \end{proof}

  Theorem~\ref{prop-i-1} describes the minimum utility of each agent  for    mechanisms satisfying IR, WIC, and efficiency.  Theorem~\ref{th-PVCG} designs a mechanism exactly satisfying the minimum requirement. Hence, we derive the following statement:

  \begin{corollary}\label{PVCG-utility}
  In network auctions, the PVCG mechanism maximizes the revenue  among all mechanisms satisfying IR, WIC, and efficiency.
  \end{corollary}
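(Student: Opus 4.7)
The plan is to show that the revenue upper bound dictated by Theorem~\ref{prop-i-1} for any IR, WIC, and efficient mechanism exactly matches the revenue collected by PVCG on every profile. Theorem~\ref{th-PVCG} already establishes that PVCG itself is IR, WIC, and efficient, so it lies in the class over which we maximize; it remains to prove it dominates every other member of that class in revenue.

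First, I would fix an arbitrary global profile $a'$ and an arbitrary mechanism $\mathcal{M}'$ satisfying IR, WIC, and efficiency, and translate the two parts of Theorem~\ref{prop-i-1} into payment bounds. Part (a) gives $m(i_1)\le H(a'_{-[i_1]})$ directly. For each agent $j\neq i_1$, part (b) says $u_j(a',\mathcal{M}')\ge prwd_j$; since $\pi(j)=0$, this utility equals $-m(j)$, yielding $m(j)\le -prwd_j$. Summing these inequalities bounds the seller's revenue by
\begin{equation*}
u_s(a',\mathcal{M}')\ \le\ H(a'_{-[i_1]})\ -\ \sum_{j\neq i_1} prwd_j.
\end{equation*}

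Next I would compute the revenue of PVCG on the same profile and verify it meets this bound with equality. By construction, PVCG charges $v'_{i_1}$ to the winner and pays each agent her participation reward, so
\begin{equation*}
u_s(a',\text{PVCG})\ =\ v'_{i_1}\ -\ \sum_{i\in N} prwd_i\ =\ v'_{i_1}\ -\ prwd_{i_1}\ -\ \sum_{j\neq i_1} prwd_j.
\end{equation*}
The crux of the argument is the identity $prwd_{i_1}=v'_{i_1}-H(a'_{-[i_1]})$. This follows because $H(a'_{-(i_1)})=v'_{i_1}$: when the highest bidder attends but reports no neighbors, she herself remains the highest bidder in the trimmed profile, so the top bid is unchanged. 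Substituting this identity into the PVCG revenue expression collapses it to $H(a'_{-[i_1]})-\sum_{j\neq i_1}prwd_j$, matching the upper bound derived above.

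I do not anticipate a serious obstacle; the only subtlety is the identification $H(a'_{-(i_1)})=v'_{i_1}$, which must be argued carefully in the (permissible) case of tied highest bidders, but even there $i_1$ remains a highest bidder in $a'_{-(i_1)}$, so the value $v'_{i_1}$ is still attained. Combining the upper bound with PVCG's achievement of it, on every profile, yields the maximality claim over the entire class of IR, WIC, and efficient mechanisms.
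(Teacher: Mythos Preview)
Your argument is correct and follows the same approach the paper intends: the paper simply remarks that Theorem~\ref{prop-i-1} gives the minimum utility each agent must receive under any IR, WIC, and efficient mechanism, and that PVCG meets these minima exactly, which is precisely what you have spelled out in detail. Your explicit verification of the identity $prwd_{i_1}=v'_{i_1}-H(a'_{-[i_1]})$ via $H(a'_{-(i_1)})=v'_{i_1}$ makes rigorous the step the paper leaves implicit.
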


  \textbf{Remark:} There are two known IR, WIC, and efficient mechanisms for network auctions: one is the multilevel mechanism \cite{DBLP:conf/sigecom/Lee16} and the other is GPR \cite{jeong2020groupwise}.  The payment policies of these two mechanisms are more complex than that of PVCG. However, by Corollary~\ref{PVCG-utility},  the PVCG guarantees  the seller gets more revenue.

  \section{Incentive Compatibility}
Next, we  emphasize another key property in mechanism design -- the incentive compatibility.
  The VCG can guarantee both IC and efficiency. However, it is not WBB.  Corollary~\ref{hardresult} says that it is impossible for an IC and IR mechanism to further satisfy both of WBB and efficiency. 
However, it is also important to consider the efficiency in IC mechanisms.
If we do not care the revenue or efficiency, an IC mechanism may do nothing meaningful, say assigning the commodity to an agent randomly  and charging 0 for all agents. We want to get some lower bound on the efficiency. 
First of all, we have a negative result on approximating the efficiency.
{ \begin{theo}
  In network auction no IC and WBB mechanism can archive a constant  ratio on efficiency in worst-case analysis.
\end{theo}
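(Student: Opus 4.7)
The plan is to exhibit, for every constant $c \in (0,1)$, a single global profile on which any IC and WBB mechanism has efficiency ratio below $c$. I would use a chain spreading graph $s \to 1 \to 2 \to \cdots \to n$ with bids chosen so that agent $n$ is the unique highest bidder and the second-highest bid $v_{n-1}$ is strictly less than $c \cdot v_n$. This calibration supports a clean dichotomy: if $\mathcal{M}$ does not allocate to agent $n$, then $AE(\mathcal{M}) \le v_{n-1} < c \cdot v_n = c \cdot AE_{\mathrm{opt}}$ and the ratio is already below $c$; so it only remains to rule out allocation to $n$.

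For that remaining case, I would first observe that IC implies IR, since the deviation $a_i' = \emptyset$ yields utility $0$ and IC forces the truthful utility to be at least as high. With IR in hand I would adapt the deviation argument in the proof of Theorem~\ref{prop-i-0}(b) to the $c$-approximate setting. For each critical intermediate agent $i$ on the $s$-to-$n$ path, I would use the sub-deviation $(v'_i, \emptyset)$ with $v_i'$ chosen slightly above $H(a'_{-[i]})/c$. The bid calibration guarantees that no other activated agent on the residual profile has bid above $c v_i'$, so the $c$-approximation assumption compels $\mathcal{M}$ to allocate the commodity to $i$ on the residual profile, taking over the role that exact efficiency plays in Theorem~\ref{prop-i-0}(a). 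Bounding $i$'s payment on the residual profile by IR and transporting the bound back through IC yields a positive lower bound on the utility that $\mathcal{M}$ must deliver to $i$ on the original profile.

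Summing these forced utilities over all intermediate critical agents produces a total payment outflow from the seller; when it exceeds the IR-based upper bound $v_n$ on the winner's payment, WBB fails and the contradiction is complete. The main obstacle will be calibrating the bid schedule so that both constraints (the top-bid gap $v_{n-1} < c \cdot v_n$, and the aggregated intermediate gaps $\sum_i (v_i - v_{i-1}/c) > v_n$) hold simultaneously, since a naive geometric choice $v_i = \alpha^i$ does not balance both at the same time for $c<1$. The calibration will therefore require either a more subtle non-geometric schedule together with a chain length $n$ growing suitably with $1/(1-c)$, or an auxiliary inter-profile IC argument in the spirit of revenue equivalence that further pins down the winner's payment as $v_n$ grows. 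Identifying the correct calibration and executing this balancing step is the crux of the proof.
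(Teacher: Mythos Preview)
Your plan has a genuine gap: the per-agent utility bounds you obtain from the single deviation $(v_i',\emptyset)$ can never be summed to exceed the winner's payment, no matter how you calibrate the schedule. Concretely, the deviation argument you outline gives $u_i \ge v_i - v_{i-1}/c$ (with $v_0=0$), and the Myerson-style threshold bound gives $m(n)\le v_{n-1}/c$. Then
\[
\text{revenue}\ \le\ \frac{v_{n-1}}{c}-\sum_{i=1}^{n-1}\Bigl(v_i-\frac{v_{i-1}}{c}\Bigr)
=\Bigl(\tfrac{1}{c}-1\Bigr)\sum_{i=1}^{n-1}v_i\ \ge 0,
\]
so the upper bound is nonnegative for every choice of $v_1,\dots,v_n$ and every $n$. (With your weaker bound $m(n)\le v_n$ the conclusion is the same, since $v_n>v_{n-1}$ makes the residual even larger.) Thus the ``subtle non-geometric schedule'' you are hoping for does not exist; the obstacle is structural, not a matter of tuning.

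The idea you list only as a fallback---the inter-profile IC argument ``in the spirit of revenue equivalence''---is in fact the whole point, and once you use it the long chain becomes unnecessary. The paper's proof uses just two agents on the path $s\to a\to b$ with $v_b=1$ and $v_a=x\in[0,c)$. Because $a$ is a loser for every such $x$, IC forces $a$'s payment (hence utility $U$) to be constant across this entire interval of bids. For each $x$ in that interval, the deviation $(x,\emptyset)$ makes $a$ the sole bidder, who must win and (by the threshold argument) pays $0$; hence $U\ge x$ for all $x<c$, so $U\ge c$. Meanwhile $b$'s payment on the profile with $v_a=x$ is at most $x/c$, so taking $x$ small (e.g.\ $x=0$) gives revenue $\le x/c - U < 0$, contradicting WBB. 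In short: you do not need $n$ agents and a calibration; you need the observation that a loser's transfer is bid-independent, applied on a two-agent path.
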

\begin{proof}Let $\mathcal{M}$ be a mechanism with a constant ratio $\alpha$ of the efficiency.
We prove the theorem by giving an example. The example contains only two agents $\{a,b\}$. The spreading graph is a path from seller $s$ to $a$ and then to $b$.
The true valuations of $a$ and $b$ are $0\leq x<\alpha$ and $1$, respectively.
The mechanism $\mathcal{M}$ should always allocate the commodity to $b$ to satisfy the constant ratio property and charge her at most $x$ to satisfy the IC property (similar to the proof of Theorem \ref{prop-i-0}(a)).
However, $\mathcal{M}$ should also give at least $\alpha$ to $a$ to satisfy the IC property, otherwise if $a$ only gets $\alpha'\in (x,\alpha)$, she will misreport to $\alpha''\in (\alpha',\alpha)$ to get more utility.
Now, the revenue is $x-\alpha<0$, which violates the WBB property. Hence, we have proved the theorem.
\end{proof}

When there is no network, the VCG can guarantee the revenue of at least the second highest bid. For network auctions, it is even not easy to get a similar bound for IC mechanisms.

\begin{defn}
Assume that there is at least one interruption agent. Call an interruption agent  the \emph{leading agent} if it  is the closest interruption agent to the seller in the spreading graph.
\end{defn}

We show that for network auctions, there is an IC mechanism that can always get the revenue at least as the highest bid in the auction where the leading agent $j_1$ does not attend, i.e.,  $H(a'_{-[j_1]})$. The mechanism is as below.

\begin{quote}{{\bf The Interruption VCG Mechanism (IVCG)}:\\
If there is no interruption agent, allocate the commodity to an agent $i_1$ with the highest bid $v'_{i_1}$ (break tie arbitrarily) and charge her $v'_{i_1}$; other agents pay 0. Otherwise, allocate the commodity to the leading agent $j_1$ and charge her $v'_{j_1}-prwd_{j_1}=H(a'_{-[j_1]})$; other agents pay 0.}
\end{quote}

\begin{theo}
The IVCG mechanism is IR, WBB, and IC.
\end{theo}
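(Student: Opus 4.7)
The plan is to verify IR and WBB by direct inspection of the two branches of IVCG and to devote the real work to IC. For IR, under truthful reports non-winners get utility $0$ trivially; the winner in the no-interruption branch pays exactly her bid $v_i$ and so gets $0$; the winner $j_1$ in the interruption branch pays $H(a'_{-[j_1]})$, which is strictly less than $v_{j_1}$ by Lemma~\ref{subset}, yielding utility $prwd_{j_1}>0$. WBB is immediate since both possible charges, $v'_{i_1}$ and $H(a'_{-[j_1]})$, are non-negative.

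For IC, fix $a'_{-i}$ and set $h:=H(a'_{-[i]})$. The pivotal observation is that $h$ depends only on $a'_{-i}$, since deleting $i$ also deletes every agent reachable only through $i$, so $h$ is invariant under any report of $i$. Combined with the identity $H(a'_{-(i)})=\max(v'_i,h)$, this yields the clean criterion that $i$ is an interruption agent in a given profile iff her reported bid exceeds $h$. I would then catalog $i$'s possible utilities under an arbitrary report $(v'_i,r'_i)$: if $i$ wins as the leading interruption she pays exactly $h$ and gets $v_i-h$; if $i$ wins in the no-interruption branch she pays her own bid $v'_i$, and since every agent in $a'_{-[i]}$ remains activated under any misreport we have $v'_i\geq h$, so her utility $v_i-v'_i\leq v_i-h$; in all other reports she gets $0$.

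The IC argument then splits on the truthful profile. If $v_i\leq h$, no winning outcome yields positive utility, so truthfulness (utility $0$) is already optimal. If $v_i>h$ and $i$ is the leading interruption truthfully, then $u^T=v_i-h$ and no misreport exceeds this bound. The delicate case is $v_i>h$ with $i$ not leading, which forces the existence of a closer interruption agent $k$. Here I would prove the structural lemma that $k$ lies on every path from $s$ to $i$: otherwise $i$ is reachable from $s$ without $k$, forcing $v_i\leq H(a_{-[k]})$ and hence $v_k>v_i$ by $k$'s interruption, while symmetrically $k$ being closer than $i$ implies $k$ is reachable without $i$, so $i$'s interruption forces $v_i>v_k$---a contradiction. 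Because every $s$-$i$ path passes through $k$, the profile $a'_{-[k]}$ never contains $i$ under any of $i$'s misreports, so $H(a'_{-[k]})$ and hence $k$'s interruption-and-leading status are invariant; $k$ stays leading, the mechanism always executes its interruption branch, and $i$ never wins.

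I expect this structural lemma about closer interruption agents blocking every $s$-$i$ path to be the principal obstacle, since it requires simultaneously reasoning about graph reachability and the two agents' interruption inequalities; everything else reduces to comparing the bounded payment $h$ against $v_i$ and $v'_i$.
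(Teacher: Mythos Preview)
Your argument is correct and follows the same high-level strategy as the paper---verify IR and WBB by direct inspection, then for IC fix $h=H(a'_{-[i]})$ and case-split on the agent's relation to it---but your decomposition is finer and in one place strictly more complete. The paper splits only into leading versus non-leading and asserts that every non-leading agent $i$ satisfies $v_i\le H(a'_{-[i]})$; that claim is literally false for a non-leading interruption agent, and the paper never explicitly invokes the chain structure of Lemma~\ref{prop-path} to patch it. Your three-way split isolates exactly this case and resolves it with a self-contained structural lemma: the leading agent $k$ lies on every $s$--$i$ path, so $i$'s misreports cannot alter $H(a'_{-[k]})$ or $k$'s interruption status, and $i$ can never become leading. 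That is more work, but it buys you a genuinely complete argument where the paper is elliptical. One minor overstatement in your write-up: you claim ``$k$ stays leading'', but in principle $i$'s misreport could manufacture a \emph{new} interruption agent closer than $k$; what your reasoning actually establishes---and all that is needed---is that $k$ remains an interruption agent strictly closer than $i$, so $i$ is never leading and the interruption branch is always taken.
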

\begin{proof}

The utility of the leading agent $j_1$ (if it exists) is $hrwd_{j_1}$ when she reports the truth, which is nonnegative. The utility of each other agent is 0.
The utility of the seller is either $v'_{i_1}\geq 0$ or  $H(a'_{-[j_1]})\geq 0$. So, the IVCG is IR and WBB.

Now, we consider the IC property. Among all the agents, only the leading agent (if it exists) may have a non-zero utility in the IVCG mechanism. Let $i$ be an agent such that she is not a leading agent when she reports truthfully. Thus, her truthful evaluation $v_i$ is not greater than $H(a'_{-[i]})$.
If agent $i$ becomes a leading agent via misreporting a bid greater than $H(a'_{-[i]})$, then $i$ gets the commodity with a price $H(a'_{-[i]})$, and then her utility is non-positive.
So agent $i$ will not be  a leading agent by misreporting her bid no matter there exists a leading agent or not in the original global profile.

Next, assume that $i$ is the leading agent when she reports truthfully. As long as agent $i$ is the leading agent by reporting any bid, her payment is  $H(a'_{-[i]})$ and her utility is $v_i-H(a'_{-[i]})$ without increment. If agent $i$ becomes not a leading agent by misreporting a bid, then her utility becomes 0, also not increasing.
So any agent can not get more utility by misreporting her bid. Furthermore, it is easy to see that no agent $i$ can change her utility by changing her reported neighborhood $r'_i$. Thus,  the IVCG mechanism   is IC.
\end{proof}

The IVCG can always get a revenue not less than that of some well-known IC mechanisms. Here we compare IVCG with IDM \cite{li2017mechanism} and DAN-MU\footnote{The original DAN-MU was designed for the multi-unit setting and it was later found not to be IC~\cite{DBLP:journals/corr/abs-2208-14591}. However, when there is only one commodity, DAN-MU is IC. Next, when we mention DAN-MU we always mean DAN-MU with one commodity.} \cite{DBLP:conf/aaai/KawasakiBTTY20} as examples.
We first give an example to show that the revenue in IVCG is greater than that in IDM and DAN-MU. The spreading graph of the example is shown in Figure \ref{fig1} and the running results of the three mechanisms  are  shown in Table \ref{tb1}. In the table, the number is the payment of the corresponding agent and the payment  with `$\star$'  means the corresponding agent is the winner. The revenue in IVCG is 2 and the revenue in IDM and DAN-MU is 1.
Next, we theoretically prove that IVCG can always get a revenue not less than that of IDM and DAN-MU on any input profile.

\begin{figure}[t]
  \centering
  \includegraphics[width=0.13\textwidth]{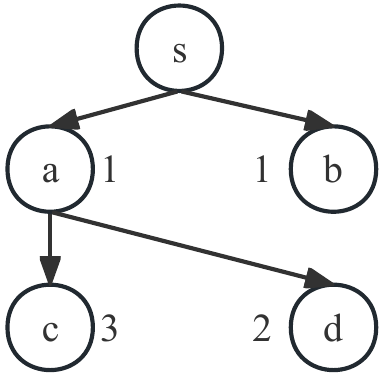}
  \caption{The spreading graph of an instance with a seller $s$ and 4 buyers $a$, $b$, $c$, and $d$. The numbers beside the agents are the bids}
  \label{fig1}
\end{figure}

\begin{table}[t]
  \centering
  \caption{Running results of IVCG, IDM, and DAN-MU on the instance in Figure \ref{fig1}. The winners are marked by `$\star$'.}
  \label{tb1}
  \begin{tabular}{lllll}
      \hline
      Agents  & IVCG   & IDM &DAN-MU  \\
      \hline
      a &0            &-1          &$1^{\star}$ \\
      b &0            &0           &0 \\
      c &$2^{\star}$  &$2^{\star}$ &0     \\
      d &0            &0           &0 \\
      \hline
      Revenue &2 &1  &1\\
      \hline
  \end{tabular}

\end{table}

\begin{lemma}
  The revenue  in  the IVCG mechanism is not smaller than that of the IDM and DAN-MU mechanisms.
  \end{lemma}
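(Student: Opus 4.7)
The plan is to compare $R_{IVCG}$ with $R_{IDM}$ and $R_{DAN\text{-}MU}$ on an arbitrary profile $a'$ by splitting on whether any interruption agent is present; in each case I would read off IVCG's revenue from its definition and argue that no competitor can do better.

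If no interruption agent exists, IVCG allocates the commodity to the highest bidder $i_1$ at price $v'_{i_1}$, so $R_{IVCG}=v'_{i_1}$. This is an absolute ceiling on the revenue of any IR mechanism: the winner is charged at most her own bid (else her utility is negative), and every non-winner, whose allocation value is $0$, is charged a non-positive amount by IR. Summing over all agents, the seller's revenue is at most $v'_{\text{winner}}\le v'_{i_1}$. Since IDM and DAN-MU are IR, both $R_{IDM}$ and $R_{DAN\text{-}MU}$ are dominated by $R_{IVCG}$ in this case.

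If at least one interruption agent exists, let $j_1$ be the leading one, so $R_{IVCG}=H(a'_{-[j_1]})$. Write $c_1,\ldots,c_d$ for the critical chain from $s$ to the (assumed unique) highest bidder $i_1=c_d$; by Lemma~\ref{prop-path} every critical agent lies on this chain, so $j_1=c_k$ for some $k$. A key monotonicity is $H(a'_{-[c_1]})\le H(a'_{-[j_1]})$: since $c_1$ sits on every $s$-to-$j_1$ path as $j_1$'s critical ancestor, deleting $c_1$ disconnects a superset of the agents disconnected by deleting $j_1$. I would then unpack IDM as a sequential offering in which each $c_i$ is offered the item at price $H(a'_{-[c_i]})$, and each declining agent collects a diffusion reward of $H(a'_{-[c_{i+1}]})-H(a'_{-[c_i]})$. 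Because $prwd_{c_i}=\max(0,v'_{c_i}-H(a'_{-[c_i]}))$, the non-interruption critical agents $c_1,\ldots,c_{k-1}$ all decline while $c_k=j_1$ accepts. The preceding rewards telescope to $H(a'_{-[c_k]})-H(a'_{-[c_1]})$, so the net IDM revenue is exactly $H(a'_{-[c_1]})\le H(a'_{-[j_1]})=R_{IVCG}$. For DAN-MU restricted to a single commodity, an analogous bookkeeping on its payment rule yields a revenue also captured (or dominated) by $H(a'_{-[c_1]})$, giving the same comparison.

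The principal obstacle will be DAN-MU, whose payment rule is not restated in the excerpt: I would have to import it from \cite{DBLP:conf/aaai/KawasakiBTTY20}, specialize to one commodity, and verify via an analogous telescoping (or a direct argument) that its revenue is at most $H(a'_{-[c_1]})$. A secondary nuisance is the tie case of multiple highest bidders, which requires explicit tie-breaking to keep the critical chain well-defined. Once both mechanism-specific revenue expressions are in hand, the whole comparison collapses to the subgraph-inclusion inequality $H(a'_{-[c_1]})\le H(a'_{-[j_1]})$ in the interruption case and to the trivial IR ceiling $v'_{i_1}$ otherwise.
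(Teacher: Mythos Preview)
Your treatment of the no-interruption case and of IDM is essentially the paper's proof: in both, the IDM revenue is computed by telescoping the diffusion payments along the critical chain to $H(a'_{-[c_1]})$, and then compared to $H(a'_{-[j_1]})$ via the cut-vertex monotonicity you state. (One minor slip: your identification of the IDM winner with the leading agent $j_1$ is not the IDM allocation rule, but this is harmless here since the telescoped revenue $H(a'_{-[c_1]})$ is independent of which critical agent actually wins.)

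The only substantive divergence is DAN-MU. You plan to import its full payment rule and hope for a telescoping bound by $H(a'_{-[c_1]})$; the paper instead uses that in the single-unit case DAN-MU charges its winner $w$ exactly $H(a'_{-[w]})$ and pays nothing else, so the revenue is simply $H(a'_{-[w]})$. The comparison is then purely positional: either $w$ is the unique highest bidder after setting $r'_w=\emptyset$, in which case $w$ coincides with the IVCG leading agent $p$; or $prwd_w=0$ and $p$ is a critical descendant of $w$. Either way $H(a'_{-[w]})\le H(a'_{-[p]})$. This is shorter than the bookkeeping you anticipate and does not pass through $H(a'_{-[c_1]})$ at all; in fact your proposed bound $R_{DAN\text{-}MU}\le H(a'_{-[c_1]})$ need not hold, since the DAN-MU winner $w$ can sit strictly between $c_1$ and $p$ on the chain.
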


  \begin{proof}
      When there is no interruption agent,  the IVCG allocates   the commodity to one of the highest bidders and the revenue is the highest bid, which is
      the maximum revenue  under all IR mechanisms. Next, we assume that there exists an interruption agent.

      In  the IVCG mechanism, the revenue  is $H(a'_{-[p]})$, where $p$ is the leading agent who gets the commodity.
      In  the IDM, the revenue  is $\sum_{i=1}^{x-1}(H(a'_{-[c_i]})-H(a'_{-[c_{i+1}]}))+H(a'_{-[w]})$, where $c_x=w$ is the agent who gets the commodity.
      So, in  the IDM, the revenue is $H(a'_{-[c_1]})$.
       Thus, we only need to compare $H(a'_{-[p]})$ and $H(a'_{-[c_1]})$. When there exists a leading agent, both of $p$ and $c_1$ are cut-vertices on any path from  seller $s$ to the highest bidder in the spreading graph. Furthermore, $c_1$ is the first cut-vertex with the shortest distance to the seller. So we know that $H(a'_{-[c_1]}) \leq H(a'_{-[p]})$. We conclude that the revenue in the IVCG is not less than that of the IDM.

       The DAN-MU mechanism  allocates the commodity to the closest agent $w$ to the seller such that $v_{w}'$ is the highest bidder among all bidders except her critical descendants.
       If $w$ is the unique highest bidder after she reports the neighbor set as $\emptyset$, agent $w$ is also the winner in the ICVG.
       Otherwise, the participation reward of $w$ is 0 and the leading agent $p$ will be
       a critical descendant of $w$.
       We always have that $H(a'_{-[p]}) \ge H(a'_{-[w]})$.
      \end{proof}

An interesting question is whether the revenue in any IC mechanism can not break the barrier of $H(a'_{-[j_1]})$ for any global profile $a'$ with a leading agent $j_1$.
Unfortunately, the answer is negative. To understand this well, we designed a special IC mechanism, called $\delta$-IVCG that can get $H(a'_{-[j_1]})+\delta$ for any $\delta >0$ on some profiles.

The $\delta$-IVCG is the same as IVCG except for one case. Let $\delta$ be a non-negative value.
We say that a global  profile $a'$  is \emph{$\delta$-gap} if it holds that: \\
(a) agent $j_1$ is the only neighbor of seller $s$;\\
(b) there is a critical descendant $j_2$ of $j_1$ such that $j_2$ is an interruption agent;\\
(c) it holds that $H(a'_{-[j_2]})-v'_{j_1}\geq \delta$. \\
W.l.o.g., for the sake of presentation, we always assume that $j_2$ is the closest agent to the seller satisfying the above conditions.

\begin{quote}{{\bf The $\delta$-IVCG Mechanism}:\\
If the global profile is not $\delta$-gap, then do the same as the IVCG. Else the global profile is $\delta$-gap,
the mechanism allocates the commodity to $j_2$ and charges her $H(a'_{-[j_2]})$, gives $j_1$ money $H(a'_{-[j_2]})-\delta-H(a'_{-[j_1]})$, and charges all other agents 0.
}
\end{quote}

\begin{theo}
  The $\delta$-IVCG mechanism is IR, WBB, and IC.
\end{theo}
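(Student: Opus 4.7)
The plan is to verify IR, WBB, and IC in turn, reducing the non-$\delta$-gap branch to the IVCG theorem already established just above and handling the $\delta$-gap branch by direct calculation together with a structured case analysis. A useful initial observation for the $\delta$-gap case is that by condition~(a) agent $j_1$ is the only neighbor of $s$, so deleting $j_1$ leaves no activated bidder and hence $H(a'_{-[j_1]})=0$; this simplifies every payment formula throughout the proof.

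For IR, the winner $j_2$ is an interruption agent, so Lemma~\ref{subset} gives $v_{j_2}>H(a'_{-[j_2]})$ and hence a non-negative utility $v_{j_2}-H(a'_{-[j_2]})$ for $j_2$; agent $j_1$'s utility reduces to $H(a'_{-[j_2]})-\delta$, which is at least $v_{j_1}\geq 0$ by condition~(c); every other agent receives $0$. For WBB, the seller's net revenue on a $\delta$-gap profile is $H(a'_{-[j_2]})-(H(a'_{-[j_2]})-\delta-H(a'_{-[j_1]}))=\delta+H(a'_{-[j_1]})\geq 0$.

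The main obstacle is IC. I would split the argument by whether the truthful profile is $\delta$-gap and by the role of the deviator. Three observations drive the case analysis. First, $H(a'_{-[j_2]})$ is invariant under $j_2$'s own misreports, since agents reachable from $s$ only through $j_2$ are excluded from $a'_{-[j_2]}$ regardless of $j_2$'s report, so $j_2$ cannot improve her utility. Second, $j_1$'s $\delta$-gap utility is independent of her own bid, and any bid deviation that violates~(c) demotes her to her IVCG payoff $v_{j_1}$, which condition~(c) itself guarantees is at most $H(a'_{-[j_2]})-\delta$; any neighbor misreport by $j_1$ can only shrink the resulting $H(a'_{-[j_2']})$ by deactivating candidates, so $j_1$ again cannot gain. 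Third, any other agent $i$ has truthful utility $0$ and can only be paid positively by becoming a new $j_2$; this requires raising her bid above $H(a'_{-[i]})$, but since she was not an interruption agent truthfully we have $v_i\leq H(a'_{-[i]})$ and her resulting utility $v_i-H(a'_{-[i]})$ is non-positive. Starting from a truthful non-$\delta$-gap profile, the failure of~(c) yields $H(a'_{-[j_2']})-\delta<v_{j_1}$, so no deviation that merely triggers $\delta$-gap can improve on $j_1$'s IVCG payoff; all remaining within-regime deviations are covered by IC of IVCG. The most delicate bookkeeping is ruling out neighbor deviations that keep the profile $\delta$-gap but alter the identity of $j_2$, which I would dispatch via the monotonicity of $H(a'_{-[\cdot]})$ under reductions of reported neighbor sets.
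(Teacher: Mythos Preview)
Your proposal is correct and follows essentially the same route as the paper: reduce the non-$\delta$-gap branch to the already-proved IVCG theorem, and on the $\delta$-gap branch use the observation $H(a'_{-[j_1]})=0$ together with a three-role case split ($j_1$, $j_2$, everyone else), invoking condition~(c) for the pivotal comparison $v_{j_1}\le H(a'_{-[j_2]})-\delta$. Your treatment is in fact slightly more careful than the paper's in flagging the possibility that $j_1$'s neighbor misreport could change the identity of $j_2$; one small imprecision is that a ``normal'' agent can in fact be an interruption agent lying beyond $j_2$ (so your clause ``since she was not an interruption agent truthfully'' is not literally true), but such an agent can never displace $j_2$ as winner, so the conclusion is unaffected---and the paper's corresponding sentence has the same harmless slip.
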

\begin{proof}
  We have proven that the IVCG is IR and WBB. Only on $\delta$-gap profiles the operations of the $\delta$-IVCG are different from that of the IVCG. Next, we prove the IR and WBB properties of the $\delta$-IVCG on $\delta$-gap profiles. Consider the $\delta$-IVCG on a  $\delta$-gap profile $a'$.
  The utility of agent $j_1$ is
  $H(a'_{-[j_2]})-\delta-H(a'_{-[j_1]})=H(a'_{-[j_2]})-\delta>0$ since $H(a'_{-[j_1]})=0$,
  the utility of agent $j_2$ is $v_{j_2}-H(a'_{-[j_2]})>0$,
   and the utility of other agents is 0. Hence, the $\delta$-IVCG is IR. The revenue of the seller is
  $H(a'_{-[j_2]})-(H(a'_{-[j_2]})-\delta-H(a'_{-[j_1]}))=H(a'_{-[j_1]})+\delta>0$.
   Thus, the $\delta$-IVCG is WBB. Next, we consider the IC property.

     We first prove that the $\delta$-IVCG is IC on $\delta$-gap profiles.
     For a $\delta$-gap profile $a'$, according to the utility of the agents, we have three different types:  agent $j_1$,  winner $j_2$,
      and other agents (called normal agents). First, we consider  agent $j_1$.
      Agent $j_1$ is always the leading agent and her utility is $H(a'_{-[j_2]})-\delta$. If
      the profile is still $\delta$-gap after her misreporting,
     she can not get more utility since she can not change the two values $H(a'_{-[j_2]})$ and $\delta$.
     If
     the profile is not $\delta$-gap after her misreporting,
     she can not get more utility since her utility will become $v_{j_1}$, which is not greater than $H(a'_{-[j_2]})-\delta$
     by the definition of $\delta$-gap profile.
      So agent $j_1$ will not misreport.
     For the winner $j_2$, her utility is $v_{j_2}-H(a'_{-[j_2]})$. If she is still the winner after misreporting,
     she can not get more utility since she can not change her truthful evaluation $v_{j_2}$ and $H(a'_{-[j_2]})$.
     If she is not the winner after misreporting, she can only be a normal agent since she can not become a leading agent for any case and then her utility is 0. Thus, agent $j_2$ will not misreport.
     Note that in the definition of $\delta$-gap profile, we require  agent $j_1$ is the only neighbor of seller $s$. Thus, for a normal agent, no matter how she reports she can not become the leading  agent. If a normal agent $i$
     becomes the winner after increasing her bid, then she will pay $H(a'_{-[i]})$. Note that agent $i$ is not the winner initially and thus $H(a'_{-[i]})\geq v_i$. Therefore, she will not get more utility by misreporting. For any case, no agents can get more utility by misreporting.

     Next, we consider a profile $a'$ that is not $\delta$-gap. For the case that the profile is still not $\delta$-gap after misreporting, it is the same for the IVCG and we have proved it. Next we assume that the profile will become $\delta$-gap after misreporting.
    For  agent $j_1$ , if she is still the leading agent after lowing down her bid to make the profile $\delta$-gap, then her utility will
     change from  $v_{j_1}$  to $H(a'_{-[j_2]})-\delta$. Note that according to the definition of $\delta$-gap profile,
     we have that $v_{j_1} \geq H(a'_{-[j_2]})-\delta$. Agent $j_1$ can not get more utility by misreporting.
     The other possible case is that a normal agent $i$ increases her bid to make the profile $\delta$-gap. If she is the winner after misreporting, she will pay $H(a'_{-[i]})$. Note that agent $i$ is not the winner initially and so    $H(a'_{-[i]})\geq v_i$. If she is not the winner, then she can only be a normal agent. Thus, she will not get more utility by misreporting.
     For any case, no agents can get more utility by misreporting.

     The mechanism always satisfies the IC property, regardless of the profile type.
  \end{proof}
  \noindent

On a profile with a leading agent $j_1$, the revenue in the IVCG is always $H(a'_{-[j_1]})$.
On profiles not $\delta$-gap, the revenue in the $\delta$-IVCG is the same as that in the IVCG;
on $\delta$-gap profiles, the revenue in $\delta$-IVCG is $\delta$ higher than that in the IVCG, where $\delta$ can be set as any positive value.
Compared to IVCG, the $\delta$-IVCG explores some descendants of the leading agent to get more revenue in some instances.
Due to the arbitrariness of `$\delta$', we can see that it is even harder to find a mechanism with the Pareto optimality of the revenue among all IR, WBB, and IC mechanisms.
The idea of $\delta$-IVCG may also be able to use in other mechanisms.

\section{False-Name Attacks}

  In our model,    the reported neighbor set $r'_i$   of each agent $i$ is a subset of the neighbour set $r_i$. We think that if $i$   reports  $i'$ then $i'\in r_i$ and agent $i$ knows $i'$. However, agent $i$ can  falsify her report by creating replicas of herself.
  In this way, the agent may   potentially gain more utility.  
To enhance the robustness of the mechanism, we examine false-name attacks, also known as ``Sybil attacks", which involve the creation of multiple replicas by a single agent \cite{shen2019multi}.

  \begin{defn}\textbf{(False-Name Attack)}
 The following action of an agent $i$ is called false-name attack. Agent $i$ creates  a set $RL_i$ of replicas,
 where agent in $RL_i\cup \{i\}$ can report any bid and any subset of $r_i\cup RL_i \cup \{i\}$.
 \end{defn}


  \begin{defn}
      An auction mechanism is \emph{false-name-proof (FNP)} if any agent can not get more utility by false-name attack.
  \end{defn}

 An agent $r$ may replace herself with an arbitrary subgraph in the spreading graph, which makes mechanisms hard to design  against false-name attacks.
 Thus, we also consider two simple   false-name attacks studied in the literature, e.g.,  see \cite{shen2019multi}.  In Type 1, all replicas are  parallel without any edges between them in the spreading graph,  and each of them can spread the sale information to a subset of the neighbors of $r$ (See Figure \ref{G2}(a)); In Type 2, all replicas are in series (Figure \ref{G2}(b)). By iteratively applying these two types and adding edges between replicas we may create complex false-name attacks. We can easily show that most mechanisms can not even handle even these two simple attacks.

  \begin{figure}[h]
      \centering
      \includegraphics[scale=0.2]{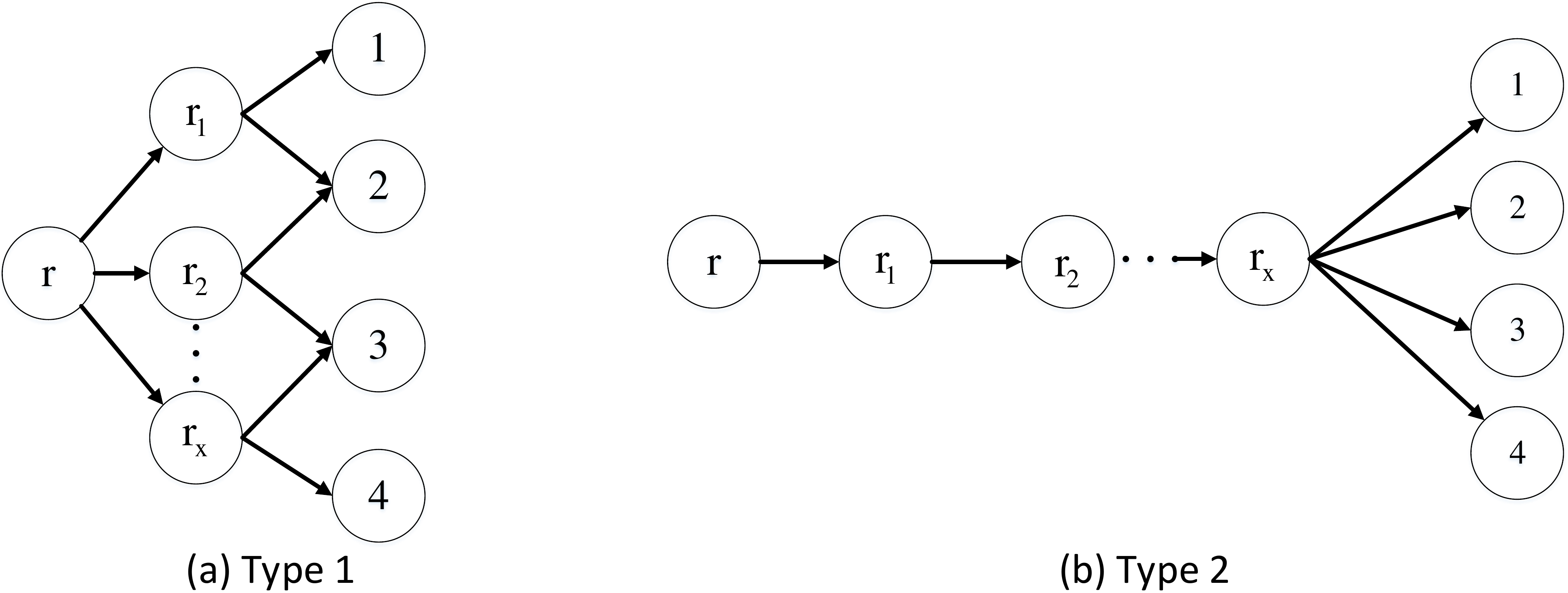}
      \caption{Two types of false-name attacks with replicas
       \label{G2}}
  \end{figure}

  \begin{figure}[h]
      \centering
      \includegraphics[width=0.38\textwidth]{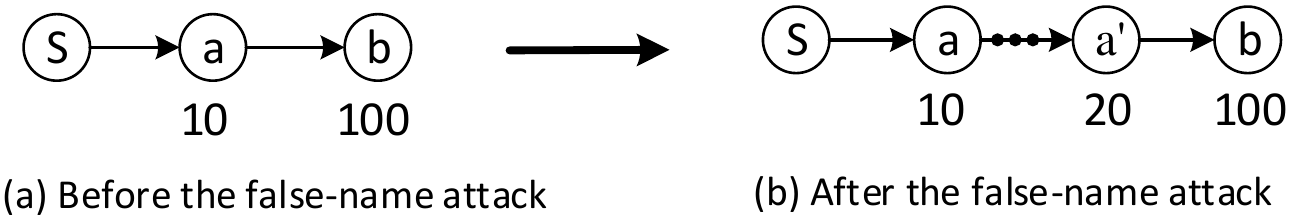}
      \caption{An example of Type 2 false-name attack}\label{fn1}
  \end{figure}

  \begin{figure}[h!]
      \centering
      \includegraphics[width=0.38\textwidth]{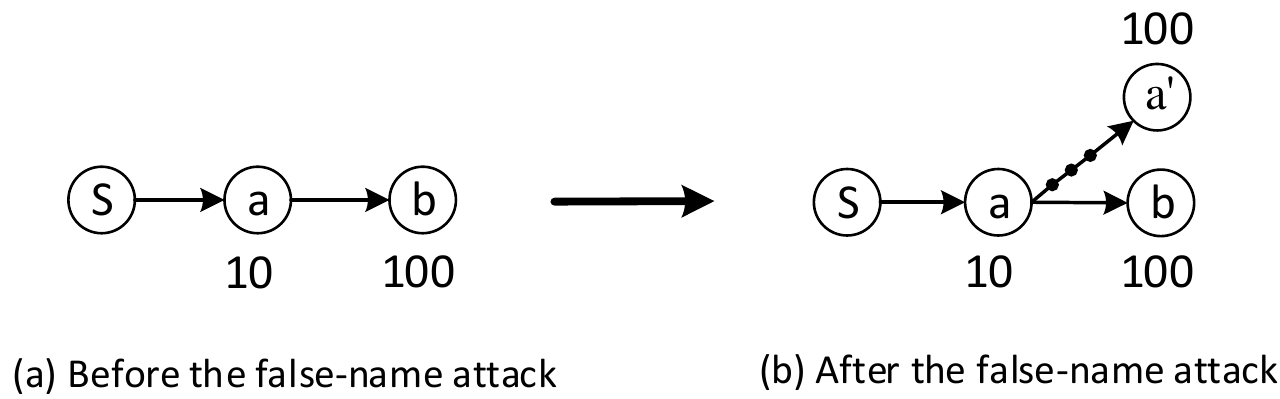}
      \caption{An example of Type 1 false-name attack}\label{fn2}
   \end{figure}

   \begin{lemma} \label{thm:VCG2}
  The VCG, PVCG, and $\delta$-IVCG mechanisms are not false-name-proof against Type 2 false-name attacks.
  \end{lemma}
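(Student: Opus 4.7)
The plan is to exhibit, for each of the three mechanisms, a small spreading graph on which one agent can perform a Type~2 false-name attack (inserting replicas in series) and thereby strictly increase her total utility. All three constructions share the same backbone: a short path $s-a-b$ with at most one additional branch, where the attacker is the intermediate agent $a$ who inserts a single replica $a_r$ in series between herself and $b$. Since there is only one replica, the ``series'' requirement of Type~2 is trivially satisfied.

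For VCG, I would take $v_a=0$ and $v_b=v>0$. Truthfully, $rwd_a=v-H(a'_{-[a]})=v$, so agent $a$'s utility is $v$. After inserting $a_r$ (with any bid at most $v$), both $a$ and $a_r$ are critical because removing either disconnects $b$ from the seller; hence each one receives reward $v$, and the attacker's combined utility becomes $2v$. For PVCG, I would instead set $v_a=\alpha$ and $v_b=v$ with $0<\alpha<v$, so that $prwd_a=\alpha$ and $a$'s utility is $\alpha$. With replica $a_r$ of bid $\beta\in(\alpha,v)$, agent $a_r$ becomes an interruption agent with $prwd_{a_r}=\beta-\alpha>0$, while $prwd_a$ stays at $\alpha$ and $b$ remains the unique highest bidder. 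Thus the attacker collects $\alpha+(\beta-\alpha)=\beta>\alpha$.

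For $\delta$-IVCG I would use a diamond whose base is already $\delta$-gap: vertices $s,a,b,c$ with edges $s-a$, $a-b$, $b-c$, $a-c$, bids $v_a=1$, $v_b=5$, $v_c=10$, and $\delta=1$. Here $j_1=a$, $j_2=c$, $H(a'_{-[c]})=5$, and $a$'s truthful utility is $5-1-0=4$. Agent $a$ then inserts a replica $a_r$ of bid $8$ in series on the $a-b$ branch while keeping the direct edge $a-c$ intact. The crucial check is that the direct edge $a-c$ keeps $c$ activated when $a_r$ is removed, so $prwd_{a_r}=0$ and $a_r$ does not take over the role of $j_2$; agent $b$ remains non-interruption; and $c$ is still the closest interruption critical descendant of $a$, now with $H(a'_{-[c]})=\max\{1,8,5\}=8$. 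The profile remains $\delta$-gap with $j_2=c$, so $a$ receives $8-1-0=7>4$.

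The main obstacle will be the $\delta$-IVCG case: the replica's bid must be chosen large enough to push $H(a'_{-[j_2]})$ up (so that the extra money paid to $j_1$ increases) yet small enough that the replica itself does not become interruption and hijack the role of $j_2$, which would destroy the $\delta$-gap property and collapse the mechanism back to IVCG. The verification therefore hinges on keeping an alternative path, the diamond's direct $a-c$ edge, so that $a_r$'s bid is absorbed into $H(a'_{-[c]})$ without making $a_r$ itself critical.
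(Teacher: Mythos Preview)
Your arguments for VCG and PVCG are correct and essentially identical to the paper's: a path $s\!-\!a\!-\!b$, with a single series replica inserted between $a$ and $b$, so that both $a$ and the replica collect a positive reward (resp.\ participation reward).

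For $\delta$-IVCG your proof is also correct but proceeds differently from the paper. The paper argues abstractly: in any $\delta$-gap profile the leading agent $j_1$ creates a Type~2 replica $j_2'$ with bid strictly between $H(a'_{-[j_2]})$ and $v'_{j_2}$, which pushes the new $H(a'_{-[j_2]})$ up to $v'_{j_2'}$ and hence raises $j_1$'s payoff from $H(a'_{-[j_2]})-\delta$ to $v'_{j_2'}-\delta$. What the paper leaves implicit is exactly the issue you flag as the ``main obstacle'': the replica must not itself become an interruption critical descendant of $j_1$ closer than $j_2$, for otherwise it hijacks the role of $j_2$ and the $\delta$-gap condition~(c) collapses. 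You resolve this by building a concrete diamond $s\!-\!a$, $a\!-\!b$, $b\!-\!c$, $a\!-\!c$ and placing the replica only on the $a\!-\!b$ branch, so that the surviving edge $a\!-\!c$ keeps $c$ activated when $a_r$ is removed; this makes $rwd_{a_r}=prwd_{a_r}=0$ and leaves $c$ as the unique interruption descendant, so the profile remains $\delta$-gap with the same $j_2$. Your route is longer but fully explicit; the paper's is shorter but requires the reader to supply the placement of the replica that avoids the hijacking problem.
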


  \begin{proof}
 Consider two agents, $a$ and $b$, with bids of 10 and 100, respectively. In the spreading graph shown in Figure \ref{fn1}(a), there exists an edge connecting seller
$s$ to agent $a$ and another edge connecting agents
$a$ and $b$.  By the payment policy of the VCG, agent $a$ gets $100$ in this example.   By the payment policy of the PVCG, agent $a$ gets $10$ in this example.  If agent $a$ creates a Type 2 replica $a'$ with the bid $20$ as shown in Figure \ref{fn1}(b), then agents $a$ and $a'$ get 100 and 90 receptively under the VCG, and agents $a$ and $a'$ get 10 and 10 receptively under the PVCG. The utility of agent $a$ will increase.   Hence, the VCG and PVCG are not false-name-proof.

  Next, we consider the $\delta$-IVCG. Let $a'$ be a $\delta$-gap profile, $j_1$ be the leading agent in it, and $j_2$ be the winner (a critical descendent of $j_1$ and also an interruption agent).
  The utility of $j_1$ in $\delta$-IVCG on $a'$ is $H(a'_{-[j_2]})-\delta$.
  The leading agent $j_1$ can create a Type 2 replica $j_2'$ with a bid $H(a'_{-[j_2]})< v'_{j'_2} < v'_{j_2}$ and then her utility will become $v'_{j'_2}-\delta$, which is larger than the previous utility.
  Hence, the $\delta$-IVCG is not false-name-proof.
  \end{proof}
  \noindent

 \begin{lemma}\label{thm:IDM2}
 The IDM is not false-name-proof against Type~1 false-name attacks.
 \end{lemma}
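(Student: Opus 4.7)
The plan is to exhibit a specific spreading graph and bid profile, as in Figure~\ref{fn2}, on which some distinguished agent $a$'s utility under IDM strictly increases when she stages the depicted Type~1 attack, thereby witnessing that IDM fails to be FNP against Type~1 attacks. The intuition I want to exploit is that in IDM an agent may be ``blocked'' from winning by the intermediate-allocation step at one of her ancestors $c_i$ whenever $v_{c_i} \ge H(a'_{-[c_{i+1}]})$; by splitting herself into parallel replicas whose bids are chosen in a narrow window, the attacker can raise the threshold faced by the blocking ancestor above her own bid, so that the ancestor no longer qualifies and instead one of the attacker's replicas is handed the commodity at a low price.

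Operationally, I would first run IDM on the truthful profile of Figure~\ref{fn2}: identify the critical chain $c_1,\ldots,c_x = w$, compute $H(a'_{-[c_i]})$ for every $i$, apply the intermediate-allocation check $v_{c_i} \ge H(a'_{-[c_{i+1}]})$ down the chain, and read off the resulting winner and payments from the telescoped revenue formula $\sum_{i=1}^{x-1}(H(a'_{-[c_i]})-H(a'_{-[c_{i+1}]})) + H(a'_{-[w]})$. This fixes the attacker's truthful utility $U$. I would then spell out the Type~1 attack: $a$ replaces herself by parallel replicas $a_1,a_2,\ldots$, each reporting a chosen subset of $r_a$ with a chosen bid, and rerun IDM on the attacked profile. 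With the parameters of Figure~\ref{fn2} the recomputed $H(a'_{-[\cdot]})$ values push the intermediate-allocation check to fail at the previously-qualifying ancestor, so the commodity is handed to one of $a$'s replicas at a moderate price, and summing the utilities of all of $a$'s replicas gives $U' > U$.

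The step I expect to require the most care is controlling the replica bids so that, in the attacked profile, two requirements hold simultaneously: (i)~the intermediate-allocation check fails at the blocking ancestor, and (ii)~the price charged to the winning replica stays below the attacker's true valuation $v_a$, so that her contribution to $a$'s utility is positive rather than the sharply negative value she would incur if IDM allocated at a price above $v_a$. Once these two windows are verified to overlap on the concrete profile of Figure~\ref{fn2}, the strict inequality $U' > U$ follows by direct arithmetic, finishing the proof.
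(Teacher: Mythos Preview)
Your overall plan---run IDM on the profile of Figure~\ref{fn2} before and after the attack and compare the attacker's total utility---is exactly what the paper does, and it will succeed if you carry out the computation correctly. However, the mechanism you describe for \emph{why} the attack works is not what happens in Figure~\ref{fn2}, and if you let that intuition guide the calculation you will go wrong.

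Concretely, there is no ``blocking ancestor'' of $a$ in Figure~\ref{fn2}: on the truthful profile $a$ herself is the IDM winner (she is the first critical agent on the chain, and once $b$ is removed she has the highest remaining bid), and her utility of $10$ is her winner's surplus, not a diffusion reward capped by some ancestor. The Type~1 attack in the figure creates a single replica $a'$ with bid $100$; its effect is \emph{not} to hand the commodity to a replica at a low price, but rather to make $a$ \emph{fail} the intermediate-allocation check (since with $b$ removed the highest remaining bid is now $100$ from $a'$, not $10$), push the allocation down to $b$, and thereby convert $a$ from winner into a rewarded diffuser with payment $H(a'_{-[a]})-H(a'_{-[b]})=0-100=-100$. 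No replica ever wins, and your worry about keeping the winning replica's price below $v_a$ is moot; indeed, any replica that did win here would pay $100$ against a true value of $10$ and contribute $-90$ to the attacker's utility. So drop the ``blocking ancestor / cheap replica win'' story, and instead verify directly that inserting a high-bid replica inflates $H(a'_{-[b]})$ and hence $a$'s diffusion reward from $10$ to $100$.
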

 \begin{proof}
     We show that the IDM is not false-name-proof against Type 1 attack. Before the false-name attack, the spreading graph is shown in Figure \ref{fn2}(a).  In the false-name attack, agent $a$ creates a Type 1 replica $a'$ and $a$ reports one more neighbor $a'$ as shown in Figure \ref{fn2}(b), where the bidding of $a'$ is 100. In the IDM, the utility of $a$ is $10$ if there is no false-name attack. In the case of the false name attack as described above, the utility of $a$ and $a'$ will be $100+0=100$.  Hence,  agent $a$ gets more utility by using this false-name attack.

    \end{proof}

  \begin{theo} \label{thm:FNP}
  The IVCG possesses the false-name-proof property.
  \end{theo}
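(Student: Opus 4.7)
The plan is to exploit the structural fact that every replica in $RL_i$ is reachable from the seller only via $i$ in the post-attack spreading graph, because only $i$ and her replicas list replicas as reported neighbours. Consequently, deleting $i$ in the post-attack profile $a''$ disconnects the entire replica cloud, yielding the key identity $H(a''_{-[i]}) = H(a'_{-[i]}) =: h$, since the set of real agents reachable from $s$ when $i$ is removed is identical in $a'$ and $a''$. The same argument applied to any real ancestor $k$ of $i$ in the truthful spreading graph gives $H(a''_{-[k]}) = H(a'_{-[k]})$ and $H(a''_{-(k)}) = H(a'_{-(k)})$, hence $prwd_k(a'') = prwd_k(a')$: every real interruption agent on the path from $s$ to $i$ keeps its status and its distance from $s$ under any false-name attack by $i$.

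The next step is to split on the truthful role of $i$, invoking Lemma~\ref{prop-path} and Lemma~\ref{subset} to conclude that the interruption agents of $a'$ form a chain of ancestors of the unique highest bidder. In the case where $i$ is an interruption agent in $a'$ but not the leading agent, the truthful leading agent $j^*$ must be a strict ancestor of $i$ (a non-ancestor interruption agent closer to $s$ would force both $v_{j^*} > v_i$ and $v_i > v_{j^*}$), so by the invariance above $j^*$ remains the leading agent in $a''$; since no replica can be closer to $s$ than $i$, the winner in $a''$ lies outside $i$'s group, giving $i$ total utility $0$. In the remaining cases no interruption agent strictly closer to $s$ than $i$ exists in $a'$; then the winner in $a''$ is either outside $i$'s group (total utility $0$), equal to $i$ with payment $h$ and utility $v_i - h$, or a replica $r$ downstream of $i$. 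In the replica case every real agent reachable from $s$ without passing through $i$ remains activated in $a''_{-[r]}$, so $H(a''_{-[r]}) \geq h$ and the utility is again bounded by $v_i - h$; the sub-case where $a''$ has no interruption agent is analogous, since the winning report must dominate every activated real bid and hence dominates $h$. When $i$ was truthfully the leading agent the bound $v_i - h$ matches her truthful utility exactly, and when $i$ was not an interruption agent we have $v_i \leq h$, so $v_i - h \leq 0$ matches her truthful utility of $0$.

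The main obstacle I expect is verifying the invariance $prwd_k(a'') = prwd_k(a')$ for real ancestors of $i$ under the full flexibility of a false-name attack (dropping real neighbours, routing bids through replicas, reporting arbitrary replica bids). The resolution is that every such manipulation lives inside the subtree rooted at $i$, so when an ancestor $k$ is removed from the spreading graph $i$ and every replica are disconnected in lockstep, and all of $i$'s strategic choices cancel symmetrically between $H(a''_{-[k]})$ and $H(a''_{-(k)})$; this, together with Lemma~\ref{prop-path} ruling out sibling interruption agents coexisting with $i$ itself being an interruption agent, closes the case analysis.
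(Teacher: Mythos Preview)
Your plan coincides with the paper's: both rest on the single observation $H(a''_{-[i]}) = H(a'_{-[i]})$ (the paper phrases it as ``false-name attacks cannot decrease $H(a'_{-[i]})$'') and then split on whether $i$ was the leading agent. Your write-up is far more explicit than the paper's three-sentence sketch, but two spots need patching.

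In your first case you establish $prwd_k(a'')=prwd_k(a')$ only for critical \emph{ancestors} $k$ of $i$, and from this alone you conclude ``$j^*$ remains the leading agent in $a''$''. That inference is incomplete: you must also exclude a real non-ancestor $k$ with $d(s,k)\le d(s,j^*)$ turning into an interruption agent after the attack. The fix is short. Such a $k$ does not separate $s$ from $i$, so $i$---and therefore $j^*$, which lies on every $s\to i$ path---survives in $a''_{-[k]}$, giving $H(a''_{-[k]})\ge v'_{j^*}$; on the other hand $d(s,k)\le d(s,j^*)$ forces $k$ to be activated in $a'_{-[j^*]}$, so $v'_k\le H(a'_{-[j^*]})<v'_{j^*}$. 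Hence $prwd_k(a'')=0$ and $j^*$ really is the closest interruption agent in $a''$. Without this step the case analysis does not close, because the generic bound $v_i-h$ you derive in the second case is strictly positive here.

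The sentence ``in the remaining cases no interruption agent strictly closer to $s$ than $i$ exists in $a'$'' is simply false: when $i$ is a non-interruption agent, the truthful leading agent may sit closer to $s$ without being an ancestor of $i$ at all. Fortunately you never use this claim---your three-way split on the identity of the $a''$-winner is already exhaustive, and the bound $\max(0,v_i-h)$ matches the truthful utility in both remaining sub-cases ($i$ leading, $i$ non-interruption)---so just delete the sentence.
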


  \begin{proof}
  For the case that there is a leading agent, if a non-leading agent creates a leading agent by the false-name-proof, she will pay money more than her truthful bid and get a negative utility.
  For the leading agent $i$, false-name attacks can not decrease $H(a'_{-[i]})$ and then she can not get more utility. The case that there is no leading agent is also easy. Thus, no agent can get more utility by false-name attacks under the IVCG mechanism.
  \end{proof}

  \section{Discussions}

We deeply analyzed how an agent influences the allocation efficiency through  different kinds of misreports and defined new  concepts such as  rewards, participation  rewards, and interruption agents.
With the help of these concepts, we can characterize mechanisms in network auctions, proved general theorems about the interplay among WBB, IR, IC, WIC, and efficiency,
design and simplify previous mechanisms.
We believe that these concepts are interesting and worthy of further study to better understand auctions in social networks.

\section*{Acknowledgments}
The work is supported by the National Natural Science Foundation of China, under grants 61972070.

\bibliography{ecai}
\end{document}